\newcommand{\eqdef}{\stackrel{\text{def}}{=}}
\newcommand{\F}{\ensuremath{\mathbb{F}}}
\newcommand{\code}[1]{\ensuremath{\mathscr{#1}}}
\newcommand{\Csec}{\code{C}_{\text{sec}}}
\newcommand{\Cpub}{\code{C}_{\text{pub}}}
\newcommand{\CC}{\code{C}}
\newcommand{\sqc}[1]{<#1^2>}
\newcommand{\scp}[2]{#1\cdot #2}
\newcommand{\cwp}{\star}
\newcommand{\word}[1]{\ensuremath{\boldsymbol{#1}}}
\newcommand{\av}{\word{a}}
\newcommand{\bv}{\word{b}}
\newcommand{\alphav}{\word{\alpha}}
\newcommand{\betav}{\word{\beta}}
\newcommand{\lambdav}{\word{\lambda}}
\newcommand{\deltav}{\word{\delta}}
\newcommand{\cv}{\word{c}}
\newcommand{\ev}{\word{e}}
\newcommand{\gv}{\word{g}}
\newcommand{\mv}{\word{m}}
\newcommand{\pv}{\word{p}}
\newcommand{\qv}{\word{q}}
\newcommand{\rv}{\word{r}}
\newcommand{\uv}{\word{u}}
\newcommand{\vv}{\word{v}}
\newcommand{\xv}{\word{x}}
\newcommand{\yv}{\word{y}}
\newcommand{\zv}{\word{z}}
\newcommand{\mat}[1]{\ensuremath{\boldsymbol{#1}}}
\newcommand{\Gp}{\mat{G}}
\newcommand{\Hm}{\mat{H}}
\renewcommand{\Im}{\mat{I}}
\newcommand{\Pm}{\mat{P}}
\newcommand{\Pim}{\mat{\Pi}}
\newcommand{\Qm}{\mat{Q}}
\newcommand{\Rm}{\mat{R}}
\newcommand{\Sm}{\mat{S}}
\newcommand{\Gms}{\mat{G_{sec}}}
\newcommand{\Gmp}{\mat{G_{pub}}}
\newcommand{\ff}[1]{GF(#1)}
\newcommand{\fq}{GF(q)}
\newcommand{\GRS}[3]{\text{\bf GRS}_{#1}(#2,#3)}
\newcommand{\RS}[2]{\text{\bf RS}_{#1}(#2)}
\spnewtheorem{assumption}[theorem]{Assumption}{\bfseries}{\itshape}
\spnewtheorem{fact}[theorem]{Fact}{\bfseries}{\itshape}
\begin{document}

\title{ A Distinguisher-Based Attack on a Variant of McEliece's
  Cryptosystem Based on Reed-Solomon Codes
} 

\author{Val\'erie Gauthier\inst{1}, Ayoub Otmani\inst{1} \and Jean-Pierre Tillich\inst{2}}
\institute{
GREYC - Universit\'e de Caen - Ensicaen\\
Boulevard Mar\'echal Juin, 14050 Caen Cedex, France.\\
\email{valerie.gauthier01@unicaen.fr, ayoub.otmani@unicaen.fr},
\and
SECRET Project - INRIA Rocquencourt \\ 
Domaine de Voluceau, B.P. 105   
78153 Le Chesnay Cedex - France \\
\email{jean-pierre.tillich@inria.fr}
}

\maketitle
\begin{center}
  \date{}
\end{center}

\begin{abstract}
Baldi et \textit{al.} proposed a variant of McEliece's cryptosystem. The main idea is to 
replace its permutation matrix by adding to it a rank $1$ matrix. 
The motivation for this change is twofold: it would allow the use of codes that were shown to be insecure 
in the original McEliece's cryptosystem, and it would reduce the key size while keeping 
the same security against generic decoding attacks. The authors suggest to use generalized Reed-Solomon 
codes instead of Goppa codes. The public code built with this method is not anymore a generalized Reed-Solomon code. 
On the other hand, it contains a very large secret generalized Reed-Solomon code. 
In this paper we present an attack that is built upon a distinguisher which is able
to identify elements of this secret code.  
The distinguisher is constructed by considering the code generated by component-wise products of codewords of the public code
(the so-called ``square code''). By using square-code dimension considerations, the initial generalized Reed-Solomon code can be
recovered which permits to decode any ciphertext. A similar technique
has already been successful for mounting an attack \cite{GOT:eprint12} against a homomorphic
encryption scheme suggested by \cite{BL12}. This work can be
viewed as another illustration of how a distinguisher of Reed-Solomon
codes can be used to devise an attack on cryptosystems based on them. 
 \end{abstract}

 
\textbf{Keywords.} Code-based cryptography, McEliece, distinguisher.

\section{Introduction}
 

Reed-Solomon codes have been suggested for the first time in a public-key
cryptosystem in \cite{Niederreiter86} but it was shown to be insecure in 
\cite{SidelShesta92}. The attack recovers the underlying Reed-Solomon
allowing the decoding of any encrypted data obtained from a McEliece-type 
cryptosystem based on them. The McEliece cryptosystem
\cite{McEliece78} on the other hand uses Goppa codes.
Since its apparition, it has withstood many attacks and after more than thirty years now, it still belongs 
to the very few unbroken public key cryptosystems. This situation substantiates the claim that inverting 
the encryption function, and in particular recovering the private key from public data, is intractable. 

No significant  breakthrough has been observed with respect to the problem of 
recovering the private key \cite{Gib91,LS01}. This has led to claim that the generator matrix of a binary Goppa 
code does not disclose any visible structure that an attacker could exploit. This is strengthened by the fact that Goppa codes
share many characteristics with random codes: for instance they asymptotically meet the Gilbert-Varshamov bound, they typically have 
a trivial permutation group, \textit{etc.}  
This is the driving motivation for conjecturing the hardness of the Goppa code distinguishing problem, which asks whether a Goppa code can be 
distinguished from a random code. This has become 
a classical belief in code-based cryptography, and semantic security
in the random oracle model \cite{NojimaIKM08}, CCA2 security in the standard model \cite{DowsleyMN09} and security in the random oracle model 
against existential forgery \cite{Dallot07} of the signature scheme \cite{CouFinSen01} are now proved by using this assumption.

\medskip

In \cite{FGOPT11a}, an algorithm  that manages to 
distinguish between a random code and a Goppa code has been introduced.
This work without undermining the security of \cite{McEliece78}
prompts to wonder whether it would be possible to devise an attack based 
on such a distinguisher. 
It was found out in \cite{MP12a} that our distinguisher \cite{FGOPT11a} has an equivalent but simpler description
in terms of the component-wise product of codes. This notion was first put forward in coding theory to unify many 
different algebraic decoding algorithms \cite{Pel92,Kot92a}. Recently, it was used in \cite{CMP11}
to study the security of cryptosystems based on Algebraic-Geometric codes.
Powers of codes are also studied in the context of secure multi-party computation (see for example \cite{Cramer09,Cramer11}).
This distinguisher is even more powerful in the case of Reed-Solomon codes 
than for Goppa codes because,
whereas for Goppa codes it is only successful for rates close to $1$, it can distinguish 
Reed-Solomon codes of any rate from random codes. 

\medskip

In this paper we propose a cryptanalysis against 
a variant of McEliece's cryptosystem \cite{McEliece78} proposed  in \cite{BBCRS11a} which is based on 
on the aforementioned version of our
distinguisher presented in \cite{MP12a}.
The main idea of this proposal is to replace the permutation matrix used to hide the secret generator matrix by another matrix of the form $ \Pim + \Rm$ 
where $\Pim$ is a permutation matrix and $\Rm$ is a rank $1$ matrix. 
The motivation for this change is twofold: it would allow the use of codes that were shown to be insecure 
in the original McEliece's cryptosystem. It also allows  to reduce the size of the keys which is a major drawback in code-based 
cryptography. In this new setting it was suggested  to use generalized Reed-Solomon codes. The public code obtained 
with this method is not anymore a generalized Reed-Solomon code. On the other hand, it contains a very large secret generalized Reed-Solomon code. Our attack consists is identifying this secret Reed-Solomon code by picking
at random a very small number of elements of the public code and computing the dimension of the vector space
generated by component-wise products of these elements with the public code. This technique is precisely what enables
to distinguish a Reed-Solomon code from a random code. In the case at hand, the dimension of the vector space
is much smaller when all elements belong to the secret Reed-Solomon code than in the generic case. This is precisely
what allows to recover the secret Reed-Solomon code.
 Once this secret code is obtained, it is then possible to completely recover the initial generalized Reed-Solomon code by using the 
square-code construction as in \cite{Wie10}. We are then able to decode any ciphertext.

\medskip

It should also be pointed out that the properties of Reed-Solomon codes with respect to the component-wise product of codes have 
already been used to cryptanalyze a McEliece-like scheme  \cite{BL05} based on subcodes of Reed-Solomon codes
\cite{Wie10}. The use of this product is nevertheless different in
\cite{Wie10} from the way we use it here.
Note also that our attack is not an 
adaptation of  the Sidelnikov and Shestakov approach \cite{SidelShesta92}. 
Our approach is completely new: it illustrates how a distinguisher
that detects an abnormal behavior can be used to recover the private
key. It should also be added that a very similar technique has been successful to attack \cite{GOT:eprint12} a homorphic
encryption scheme based on Reed-Solomon codes \cite{BL12}.

\medskip

\paragraph{\bf Organisation of the paper.} In Section~\ref{sec:basics} we recall important notions from coding theory. 
In Section~\ref{sec:schemeit} we describe 
the cryptosystem proposed in \cite{BBCRS11a} and in
Section~\ref{sec:attack_Baldi}  we explain an attack of this system.

\section{Reed-Solomon Codes and the Square Code}
\label{sec:basics}
We recall in this section a few relevant results and definitions from coding theory and bring in the
fundamental notion which is used in the attack, namely the square code.
A linear \emph{code} $\code{C}$ of {\em length} $n$ and {\em dimension} $k$ over a finite field $\fq$ of $q$ elements is a subspace of dimension $k$ of the full
space $\fq^n$. 
 It is generally specified by a 
full-rank matrix called a generator matrix which is a $k\times n$ matrix $\Gp$ (with $k \leq n$) over $\fq$ whose
rows span the code:
 $$\code{C} = \left\{\uv \Gp ~|~ \uv \in \fq^k \right\}.$$
 It can also be specified by a {\em parity-check} matrix $\Hm$, which is a matrix whose right kernel is equal to 
 the code, that is
 $$
 \code{C} =  \left\{\xv \in \fq^n ~|~ \Hm \xv^T=0 \right\},
 $$
 where $\xv^T$ stands for the column vector which is the transpose of the row vector $\xv$.
 The {\em rate} of the code is 
given by the ratio $\frac{k}{n}$.
Code-based public-key cryptography focuses on linear codes that have a polynomial time decoding algorithm. The role of decoding 
algorithms is to correct  errors of prescribed weight. We say that a decoding algorithm
corrects $t$ errors if it recovers $\uv$ from the
knowledge of $\uv\Gp + \ev$ for all possible
$\ev \in \F_q^n$ of weight at most $t$.

Reed-Solomon codes form a special case of codes with a very powerful low complexity decoding algorithm.
It will be convenient to use the definition of Reed-Solomon codes and generalized Reed-Solomon codes as
{\em evaluation codes} 
\begin{definition}[Reed-Solomon code and generalized Reed-Solomon code] \label{defGRS}
Let $k$ and $n$ be integers such that $1 \leqslant k < n \leqslant q$ where $q$ is a power of
a prime number.
Let $\xv = (x_1,\dots{},x_n)$ be an $n$-tuple of distinct elements of
$\fq$. 
The \emph{Reed-Solomon} code  $\RS{k}{\xv}$ of dimension $k$ is the set of 
$(p(x_1),\dots{},p(x_n))$
when $p$ ranges over all polynomials of degree $\leqslant k-1$
with coefficients in $\fq$.
The generalized Reed-Solomon code $\GRS{k}{\xv}{\yv}$ of dimension $k$ is associated to a couple 
$(\xv,\yv) \in \fq^n \times \fq^n$ where $\xv$ is chosen as above and
the entries $y_i$ are arbitrary non zero elements in $\fq$. 
It is defined as the set of  $(y_1p(x_1),\dots{},y_np(x_n))$
where $p$ ranges over all polynomials of degree $\leqslant k-1$
with coefficients in $\fq$.
\end{definition}

Generalized Reed-Solomon codes are quite important in coding theory
due to the conjunction of several factors such as:
\begin{enumerate}
\item Their minimum distance $d$ is maximal among all codes of the
  same dimension $k$ and length $n$  because $d$ is equal to $n-k+1$.
\item They can be efficiently decoded in polynomial time when the number of errors
is less than or equal to $\displaystyle \left \lfloor \frac{d-1}{2}
  \right \rfloor = \left \lfloor \frac{n-k}{2} \right \rfloor$. 
\end{enumerate}
It has been suggested to use them in a public-key cryptosystem for the first time in \cite{Niederreiter86} but 
it was discovered that this scheme is insecure in \cite{SidelShesta92}. Sidelnikov and Shestakov namely showed that it is 
possible  
to recover in polynomial time for any generalized Reed-Solomon code a possible couple $(\xv,\yv)$ which defines it.
This is all what is needed to decode efficiently such codes and is therefore enough to break the Niederreiter cryptosystem suggested in 
\cite{Niederreiter86} or a McEliece type cryptosystem \cite{McEliece78} when Reed-Solomon are used instead of Goppa codes.

We could not find a way to adapt the Sidelnikov and Shestakov approach 
for recovering the secret Generalized Reed-Solomon code from the public generating matrix $\Gmp$ in the 
Baldi et \textit{al.} scheme. 
However a Reed-Solomon displays a quite peculiar property with respect to the component-wise
product which is denoted by  $\av \star \bv $ for two vectors 
 $\av=(a_1, \dots, a_n)$ and $\bv=(b_1, \dots, b_n)$ and which is defined by
 $\av \star \bv \eqdef (a_1b_1,\dots{},a_n b_n)$. This can be seen by bringing in the following definition

\begin{definition}[Star product of codes -- Square code]
Let $\code{A}$ and $\code{B}$ be two codes of length $n$. The
\emph{star product code} denoted by $<\code{A} \star \code{B}>$ of $\code{A}$ and $\code{B}$ is the vector space
spanned by all products $\av \star \bv$ where $\av$ and $\bv$ range over $\code{A}$ and $\code{B}$ respectively.
When $\code{B} = \code{A}$,  $<\code{A} \star \code{A}>$ is called the \emph{square code} of $\code{A}$
and is denoted by $<\code{A}^2>$.
\end{definition}

It is clear that $<\code{A} \star \code{B}>$ is also generated by the $\av_i \star \bv_j$'s where the $\av_i$'s and the
$\bv_j$'s form a basis of $\code{A}$ and $\code{B}$ respectively.
Therefore
\begin{proposition}
 $$\dim(<\code{A} \star \code{B}>) \leq \dim(\code{A}) \dim(\code{B}).$$ 
\end{proposition}
We expect that the square code when applied to a random linear code should be a code of dimension of
order $\min\left\{\binom{k+1}{2},n\right\}$. Actually it can be shown by the proof technique of 
\cite{FGOPT11a}  that with probability 
going to $1$ as $k$ tends to infinity, the square  
code is of dimension $\min\left\{ \binom{k+1}{2}(1+o(1)),n\right\}$ when $k$ is of the form $k=o(n^{1/2})$, see also
\cite{MP12a}.
 On the other hand generalized Reed Solomon codes behave
in a completely different way

\begin{proposition}\label{prop:square}
$<\GRS{k}{\xv}{\yv}^2>=\GRS{2k-1}{\xv}{\yv\star\yv}$.
\end{proposition}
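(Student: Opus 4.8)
The plan is to establish the two inclusions separately, in both cases reducing the statement to elementary facts about polynomials.

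For $<\GRS{k}{\xv}{\yv}^2> \subseteq \GRS{2k-1}{\xv}{\yv\star\yv}$, I would argue from a spanning set. The square code is spanned by the star products $\av \star \bv$ with $\av,\bv \in \GRS{k}{\xv}{\yv}$, so it is enough to check that each such product is a codeword of $\GRS{2k-1}{\xv}{\yv\star\yv}$. Writing $\av=(y_1 p(x_1),\dots,y_n p(x_n))$ and $\bv=(y_1 q(x_1),\dots,y_n q(x_n))$ with $\deg p,\deg q\le k-1$, we get $\av\star\bv=(y_1^2\,(pq)(x_1),\dots,y_n^2\,(pq)(x_n))$; since $\deg(pq)\le 2k-2=(2k-1)-1$ and the multiplier vector of $\GRS{2k-1}{\xv}{\yv\star\yv}$ is precisely $\yv\star\yv=(y_1^2,\dots,y_n^2)$, this vector is by definition an element of $\GRS{2k-1}{\xv}{\yv\star\yv}$. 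This direction is immediate.

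For the reverse inclusion, the key observation is that the evaluation map sending a polynomial $r$ of degree $\le 2k-2$ to $(y_1^2 r(x_1),\dots,y_n^2 r(x_n))$ is linear, so its image $\GRS{2k-1}{\xv}{\yv\star\yv}$ is spanned by the images of the monomial basis $1,X,\dots,X^{2k-2}$. It therefore suffices to show that for every $j$ with $0\le j\le 2k-2$ the monomial $X^j$ can be written as $X^a X^b$ with $0\le a,b\le k-1$; then the identity $(y_1^2 x_1^j,\dots,y_n^2 x_n^j)=(y_1 x_1^a,\dots,y_n x_n^a)\star(y_1 x_1^b,\dots,y_n x_n^b)$ exhibits each such basis vector as a star product of two codewords of $\GRS{k}{\xv}{\yv}$, hence as an element of the square code. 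Such a factorization exists: put $a=\min(j,k-1)$ and $b=j-a$; then $b\ge 0$ and $b=j-\min(j,k-1)\le (2k-2)-(k-1)=k-1$.

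I do not anticipate any real difficulty: the proposition is in essence the multiplicativity of polynomial evaluation combined with the fact that a polynomial of degree $\le 2k-2$ factors (here into monomials) into two pieces of degree $\le k-1$. The only points demanding a little care are the bookkeeping on the degree and index bounds in the reverse inclusion, and the implicit requirement that $\GRS{2k-1}{\xv}{\yv\star\yv}$ be a legitimate generalized Reed-Solomon code in the sense of Definition~\ref{defGRS} — that is, $2k-1<n$ and all entries $y_i^2$ of $\yv\star\yv$ nonzero — which follows from the hypotheses on $\GRS{k}{\xv}{\yv}$, the regime $2k-1<n$ being exactly the one relevant to the attack.
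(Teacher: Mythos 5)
Your proof is correct and follows essentially the same route as the paper: both directions are handled via the evaluation-code definition, with the forward inclusion coming from the degree bound $\deg(pq)\le 2k-2$ and the reverse inclusion from expressing codewords of $\GRS{2k-1}{\xv}{\yv\star\yv}$ through star products (the paper simply asserts this converse, while you make it explicit via the monomial factorization $X^j = X^{\min(j,k-1)}X^{j-\min(j,k-1)}$). Your added remark that one needs $2k-1<n$ and $y_i^2\neq 0$ for the right-hand side to be a genuine GRS code is a fair observation and matches the paper's restriction of this statement to the low-rate regime.
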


This follows immediately from the definition of a generalized Reed Solomon code as an evaluation code since
the star product of two elements $\cv=(y_1 p(x_1),\dots,y_np(x_n))$ and $\cv'=(y_1 q(x_1),\dots,y_nq(x_n))$ of $\GRS{k}{\xv}{\yv}$ where
$p$ and $q$ are two polynomials of degree at most $k-1$ is of the form 
$$\cv \star \cv' = (y_1^2 p(x_1)q(x_2),\dots,y_n^2 p(x_n)q(x_n))=(y_1^2r(x_1),\dots,y_n^2r(x_n))$$
where $r$ is a polynomial of degree $\leq 2k-2$. Conversely, any element of the form $(y_1^2r(x_1),\dots,y_n^2r(x_n))$
where $r$ is a polynomial of degree less than or equal to $2k-1$  is a linear combination of star products of two elements of $\GRS{k}{\xv}{\yv}$.

This proposition shows
that the square code is only of dimension $2k-1$ when $2k-1 \leq n$, which is quite unusual.
This property can also be used in the case $2k-1 >n$. To see this, consider the dual of the Reed-Solomon code.
The {\em dual} $\code{C}^\perp$ of a code $\code{C}$ of length $n$ over $\fq$ is defined by 
$$
\code{C}^\perp = \left\{\xv \in \fq^n| \scp{\xv}{\yv}=0, \forall \yv \in \code{C}\right\},
$$ 
where $\scp{\xv}{\yv}= \sum x_i y_i$ stands for the standard inner product between elements of $\fq^n$.
The dual of a generalized Reed-Solomon code is itself a generalized Reed-Solomon code, see
\cite[Theorem 4, p.304]{MacSloBook}

 \begin{proposition}\label{pr:dual}
 $$
 \GRS{k}{\xv}{\yv}^\perp = \GRS{n-k}{\xv}{\yv'} 
 $$
 where the length of $\GRS{k}{\xv}{\yv}$ is $n$ and $\yv'$ is a certain element of $\fq^n$ depending 
 on $\xv$ and $\yv$.
 \end{proposition}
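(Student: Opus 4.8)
The plan is to exhibit an explicit candidate for $\yv'$ and then to verify it works by combining a dimension count with a single orthogonality computation. First I would set
$$y'_i \eqdef \frac{1}{y_i \prod_{j \neq i}(x_i - x_j)}, \qquad 1 \leq i \leq n.$$
Since the $x_i$ are pairwise distinct and the $y_i$ are nonzero, each $y'_i$ is a well-defined nonzero element of $\fq$, so $\yv' = (y'_1,\dots,y'_n)$ is an admissible multiplier vector and $\GRS{n-k}{\xv}{\yv'}$ is a genuine generalized Reed-Solomon code, of dimension $n-k$ by Definition~\ref{defGRS}. On the other hand $\GRS{k}{\xv}{\yv}$ has dimension $k$, so $\GRS{k}{\xv}{\yv}^\perp$ has dimension $n-k$ as well. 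Hence it suffices to prove the inclusion $\GRS{n-k}{\xv}{\yv'} \subseteq \GRS{k}{\xv}{\yv}^\perp$.

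Second, I would pick an arbitrary $\cv = (y_1 p(x_1),\dots,y_n p(x_n)) \in \GRS{k}{\xv}{\yv}$ with $\deg p \leq k-1$ and an arbitrary $\dv = (y'_1 q(x_1),\dots,y'_n q(x_n)) \in \GRS{n-k}{\xv}{\yv'}$ with $\deg q \leq n-k-1$, and compute, using $y_i y'_i = 1/\prod_{j\neq i}(x_i - x_j)$,
$$\scp{\cv}{\dv} = \sum_{i=1}^n y_i y'_i\, p(x_i)q(x_i) = \sum_{i=1}^n \frac{r(x_i)}{\prod_{j\neq i}(x_i - x_j)}, \qquad r \eqdef pq,$$
where $\deg r \leq (k-1)+(n-k-1) = n-2$. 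So the whole proposition reduces to the identity $\sum_{i=1}^n r(x_i)/\prod_{j\neq i}(x_i - x_j) = 0$ for every $r \in \fq[X]$ with $\deg r \leq n-2$.

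Third, I would establish that identity via Lagrange interpolation: the left-hand side is exactly the coefficient of $X^{n-1}$ in the interpolation polynomial
$$L(X) = \sum_{i=1}^n r(x_i) \prod_{j\neq i} \frac{X - x_j}{x_i - x_j},$$
which is the unique polynomial of degree $\leq n-1$ agreeing with $r$ at the $n$ distinct points $x_1,\dots,x_n$. Since $\deg r \leq n-2 < n$, uniqueness forces $L = r$, and as $\deg r \leq n-2$ the coefficient of $X^{n-1}$ in $r$ — hence the sum — vanishes. (Alternatively one can prove $\sum_i x_i^m / \prod_{j\neq i}(x_i - x_j) = 0$ for $0 \leq m \leq n-2$ directly from the partial-fraction expansion of $X^m/\prod_j (X-x_j)$ and its behaviour as $X \to \infty$, then extend by linearity.) Putting the three steps together gives $\cv \perp \dv$, so $\GRS{n-k}{\xv}{\yv'} \subseteq \GRS{k}{\xv}{\yv}^\perp$, and equality of dimensions finishes the proof.

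The only genuinely non-routine ingredient is the interpolation identity in the third step; everything else is bookkeeping with Definition~\ref{defGRS} and a dimension count. It is also worth noting that the argument uses nothing about $k$ beyond $1 \leq k < n \leq q$, so it is uniform in the parameters, and that $\yv'$ depends only on $\xv$ and $\yv$, as claimed in the statement.
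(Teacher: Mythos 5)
Your proof is correct. There is nothing internal to compare it against: the paper does not prove Proposition~\ref{pr:dual} at all, it simply cites MacWilliams--Sloane (Theorem~4, p.~304). What you wrote is essentially the standard derivation that this reference encapsulates, and it is sound in every step: the explicit multiplier $y'_i = \bigl(y_i \prod_{j \neq i}(x_i - x_j)\bigr)^{-1}$ is well defined and nonzero because the $x_i$ are pairwise distinct and the $y_i$ nonzero; orthogonality correctly reduces to $\sum_{i=1}^n r(x_i)/\prod_{j\neq i}(x_i - x_j) = 0$ for $\deg r \leq n-2$, since $\deg(pq) \leq (k-1)+(n-k-1) = n-2$; that identity is exactly the vanishing of the $X^{n-1}$ coefficient of the Lagrange interpolant, which equals $r$ by uniqueness of interpolation at $n$ distinct points; and the dimension count $\dim \GRS{n-k}{\xv}{\yv'} = n-k = \dim \GRS{k}{\xv}{\yv}^\perp$ legitimately upgrades the inclusion to equality. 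Compared with the bare citation in the paper, your argument has the added value of being self-contained and of exhibiting $\yv'$ explicitly, which in particular makes transparent the claim in the statement that $\yv'$ depends only on $\xv$ and $\yv$.
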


 Therefore when $2k-1 > n$ a Reed-Solomon code $\GRS{k}{\xv}{\yv}$ can also be distinguished from 
 a random linear code of the same dimension by computing the dimension of
 $<\left(\GRS{k}{\xv}{\yv}^\perp\right)^2>$. We have in this case
 $$<\left(\GRS{k}{\xv}{\yv}^\perp\right)^2>
 =<\GRS{n-k}{\xv}{\yv'}^2>
 = <\GRS{2n-2k-1}{\xv}{\yv'\star\yv'}>
 $$
 and we obtain a code of dimension $2n-2k-1$.

 The star product of two codes is the fundamental notion used in the decoding algorithm based on an error correcting pair
 \cite{Pel92,Kot92a} which unifies common ideas to many algebraic
 decoding algorithms. It has been used for the first time to 
cryptanalyze a McEliece-like scheme \cite{BL05} based on subcodes of Reed-Solomon codes
 \cite{Wie10}. The use of the star product is nevertheless different in \cite{Wie10} from the way we use it here. In this paper,
 the star product is used to identify  for a certain subcode  $\code{C}$ of a generalized Reed-Solomon code $\GRS{k}{\xv}{\yv}$
 a possible pair $(\xv,\yv)$. This is achieved by computing $<\code{C}^2>$ which in the case which is considered turns out to 
 be equal to $<\GRS{k}{\xv}{\yv}^2>$ which is equal to $\GRS{2k-1}{\xv}{\yv \star \yv}$. The Sidelnikov and Shestakov algorithm is then 
 used on  $<\code{C}^2>$ to recover a possible $(\xv,\yv\star\yv)$ pair to describe $<\code{C}^2>$ as a generalized Reed-Solomon
 code. From this, a possible $(\xv,\yv)$ pair for which $\code{C} \subset \GRS{k}{\xv}{\yv}$ is deduced.

\section{Baldi et \textit{al.} Variant of McEliece's Cryptosystem} \label{sec:schemeit}

The cryptosystem proposed by Baldi et \textit{al.}  in \cite{BBCRS11a} is a variant of McEliece's cryptosystem \cite{McEliece78}. The main idea is to replace the permutation matrix used to hide the secret generator matrix by one of the form $ \Pim + \Rm$ where
$\Pim$ is a permutation matrix and $\Rm$ is a rank-one matrix. From
the authors' point of view, this new kind of transformations would
allow to use  families of codes that were shown insecure in the
original McEliece's cryptosystem. In particular, it would become possible to use generalized Reed-Solomon codes in this new framework.
The scheme can be summarized as follows.

\begin{description}
	\item \textbf{Secret key.} 
          \begin{itemize}
          \item $\Gms$ is a generator matrix of a generalized
            Reed-Solomon code of length $n$ and dimension $k$ over $\fq$,  
          \item $ \Qm  \eqdef \Pim + \Rm $ where
            $\Pim$ is an $n \times n$ permutation matrix, 
          \item $\Rm$ is a rank-one matrix over $\fq$ such that $\Qm$
            is invertible, 
          \item $\Sm$ is a $k \times k$ random invertible  matrix over $\fq$.
          \end{itemize}
        \item \textbf{Public key.} $\displaystyle \Gmp \eqdef \Sm^{-1} \Gms \Qm^{-1}$. 
          
	\item \textbf{Encryption.} The ciphertext $\cv \in \fq^n$ of a plaintext
          $\mv \in \fq^k$ is obtained by drawing at random $\ev$
          in $\fq^n$ of weight less than or equal  to $\frac{n-k}{2}$ and computing
          $\displaystyle \cv \eqdef \mv \Gmp  +  \ev$. 
          
	\item \textbf{Decryption.} It consists in performing the three
          following steps:
	\begin{enumerate}
	\item Guessing the value of  $\ev  \Rm$;
	\item Calculating $\cv' \eqdef \cv \Qm - \ev \Rm= \mv \Sm^{-1}\Gms + \ev  \Qm - \ev \Rm =  \mv \Sm^{-1}\Gms + \ev  \Pim $
	and using the decoding algorithm of the generalized Reed-Solomon code to recover
	$\mv \Sm^{-1}$ from the knowledge of $\cv'$;
	\item Multiplying the result of the decoding by $\Sm$ to recover $\mv$.
	\end{enumerate}
\end{description}

The first step of the decryption, that is guessing the value $\ev
\Rm$, boils down to trying $q$ elements (in the worst case) since
$\Rm$ is of rank $1$. Indeed, there exist 
 $\alphav \eqdef (\alpha_1, \dots{}, \alpha_n) $ and $\betav \eqdef (\beta_1, \dots{}, \beta_n)$ in $\fq^n$
 such that  $\Rm \eqdef \alphav^T \betav$. Therefore $\ev \Rm = \ev \alphav^T \betav= \gamma \betav$ where $\gamma$
 is an element of $\fq$. The second step of the decryption can also be
 performed efficiently because $\ev \Pim$ is of weight less than or
 equal to $\frac{n-k}{2}$, and $\frac{n-k}{2}$ errors can be corrected
 in polynomial time in a generalized Reed-Solomon code of length $n$ and dimension $k$  by well-known standard decoding algorithms.

\section{Attack on the Baldi et \textit{al.} Cryptosystem Using GRS Codes}\label{sec:attack_Baldi}

\subsection{Case where $2k+2 < n$}
We define $\code{C}_{sec}$ and $\code{C}_{pub}$ to be the codes
generated by the matrices $\Gms$ and $\Gmp$ respectively. 
We denote by $n$ the length of these codes and by $k$ their 
dimension. We  assume in this subsection that
\begin{equation}
\label{eq:small_rate}
2k +2 < n
\end{equation}
As explained in Section \ref{sec:schemeit}, $\code{C}_{sec}$ is a GRS code. 
It is also assumed in \cite{BBCRS11a} that the matrix $\Qm= \Pim +
\Rm$ is invertible. It will be convenient to bring in the code
$\displaystyle \CC \eqdef \Csec \Pim^{-1}$. The matrix
$\Rm$ is assumed to be of rank one. From Lemma~\ref{lem:RPi-1}
in Appendix~\ref{appendixBaldi}, the matrix $\Rm\Pim^{-1}$ is also of rank one. Hence
there exist $\av$ and $\bv$ in $\fq^n$ such that:
\begin{equation}
\label{eq:RPi-1}
\Rm \Pim^{-1} = \bv^T \av.
\end{equation}
This code $\CC$, being a permutation of a generalized Reed-Solomon code, is itself a generalized Reed-Solomon 
code. So there are elements $\xv$ and $\yv$ in $\fq^n$ such that
$\displaystyle \CC = \GRS{k}{\xv}{\yv}$. 
There is a simple relation between $\Cpub$ and $\CC$ as explained
by the following lemma.
\begin{lemma}\label{lem:structure}
Let $\lambdav \eqdef -\frac{1}{1+\scp{\av}{\bv}} \bv$. For any
$\cv$ in $\Cpub$ there exists $\pv$ in $\CC$ such that: 
\begin{equation}
\label{eq:structure}
\cv = \pv + (\scp{\pv}{\lambdav}) \av.
\end{equation}
\end{lemma}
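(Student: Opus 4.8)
The plan is to unravel the definition of $\Cpub$ and reduce the whole statement to a rank-one update of the identity matrix. First I would observe that, since $\Sm$ is invertible, the row space of $\Gmp=\Sm^{-1}\Gms\Qm^{-1}$ coincides with the row space of $\Gms\Qm^{-1}$; hence $\Cpub=\Csec\Qm^{-1}$, and every $\cv\in\Cpub$ may be written $\cv=\sv\Qm^{-1}$ for some $\sv\in\Csec$.

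Next I would compute $\Qm^{-1}$ explicitly. Factoring $\Qm=(\Im+\Rm\Pim^{-1})\Pim$ gives $\Qm^{-1}=\Pim^{-1}(\Im+\Rm\Pim^{-1})^{-1}$, and by~\eqref{eq:RPi-1} we have $\Im+\Rm\Pim^{-1}=\Im+\bv^T\av$. The determinant of this matrix equals $1+\scp{\av}{\bv}$ (recall $\scp{\av}{\bv}=\av\bv^T$ is a scalar), which is nonzero precisely because $\Qm$ is assumed invertible; a one-line check that $(\Im+\bv^T\av)\bigl(\Im-\tfrac{1}{1+\scp{\av}{\bv}}\bv^T\av\bigr)=\Im$ (the Sherman--Morrison identity) then yields
$$\Qm^{-1}=\Pim^{-1}\left(\Im-\frac{1}{1+\scp{\av}{\bv}}\bv^T\av\right).$$

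Finally I would substitute. Given $\cv=\sv\Qm^{-1}$ with $\sv\in\Csec$, set $\pv\eqdef\sv\Pim^{-1}$, so that $\pv\in\Csec\Pim^{-1}=\CC$ by the very definition of $\CC$. Then
$$\cv=\pv\left(\Im-\frac{1}{1+\scp{\av}{\bv}}\bv^T\av\right)=\pv-\frac{1}{1+\scp{\av}{\bv}}(\pv\bv^T)\av=\pv-\frac{\scp{\pv}{\bv}}{1+\scp{\av}{\bv}}\av,$$
using that $\pv\bv^T=\scp{\pv}{\bv}$ is a scalar. Since the coefficient of $\av$ is a linear function of $\pv$, it equals $\scp{\pv}{\lambdav}$ with $\lambdav=-\frac{1}{1+\scp{\av}{\bv}}\bv$, which is exactly~\eqref{eq:structure}.

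I do not expect a genuine obstacle in this argument; the only points requiring care are the bookkeeping with row- versus column-vector conventions (so that $\bv^T\av$ is the rank-one matrix while $\scp{\av}{\bv}=\av\bv^T$ is the scalar appearing in the denominator) and the justification that $1+\scp{\av}{\bv}\neq 0$, which I would derive from the standing hypothesis that $\Qm$ is invertible via the determinant computation above. Carrying the factor $\Sm^{-1}$ along throughout and absorbing it only at the end would work just as well, but discarding it at the outset keeps the computation cleanest.
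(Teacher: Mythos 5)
Your proof is correct and follows essentially the same route as the paper: it factors $\Qm=\Pm\Pim$ with $\Pm=\Im+\bv^T\av$, inverts $\Pm$ by the Sherman--Morrison identity (the paper's Lemma~\ref{lem:inverse}), and substitutes to express each $\cv=\pv\Pm^{-1}$ with $\pv\in\CC$. The only (welcome) extra touch is your explicit justification that $1+\scp{\av}{\bv}\neq 0$ via $\det(\Im+\bv^T\av)=1+\scp{\av}{\bv}$, which the paper leaves implicit in the equivalence of invertibility of $\Pm$ and $\Qm$.
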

The proof of this lemma is given in Appendix~\ref{appendixBaldi}. From now on we make the assumption that 
\begin{equation}
\label{eq:assumption}
\lambdav \notin \CC^\perp.
\end{equation}
If this is not the case then $\Cpub=\CC=\GRS{k}{\xv}{\yv}$ and there is 
straightforward attack by applying the Sidelnikov and Shestakov algorithm
\cite{SidelShesta92}. It  finds $(\xv',\yv')$ that expresses $\Cpub$ as $\GRS{k}{\xv'}{\yv'}$. This
allows to easily decode $\Cpub$. 

\medskip

Our attack relies on identifying a code of dimension $k - 1$
that is both a subcode of  $\Cpub$ and the
Generalized Reed-Solomon code $\CC$. It consists more precisely of
codewords $\pv + (\scp{\pv}{\lambdav}) \av$ with $\pv$ in $\CC$ such that  
$\scp{\pv}{\lambdav} = 0$. This particular code which is denoted by
$\CC_{\lambdav^\perp}$ is hence:
%
$$
\CC_{\lambdav^\perp} \eqdef \CC \cap <\lambdav>^\perp
$$ 
where $<\lambdav>$ denotes the vector space spanned by $\lambdav$. 
It is a subspace of $\Cpub$ of codimension $1$ if $\lambdav \notin \CC^\perp$.
This strongly suggests that $\sqc{\Cpub}$ should have an unusual
low dimension since  $\sqc{\CC}$ has dimension  $2k-1$ by
Proposition \ref{prop:square}. 
More  exactly we have here:

\begin{proposition}
\label{prop:dimension_sqCpub}
\begin{enumerate}
\item[]
\item $\displaystyle \sqc{\Cpub}  ~\subset~  \sqc{\CC} ~+~ \CC \cwp \av ~+~ <\av\cwp\av>$
\item $\dim\left( \sqc{\Cpub} \right) \leqslant  3k-1$
\end{enumerate}
\end{proposition}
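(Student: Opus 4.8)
The plan is to treat the two items separately. For the inclusion, I would use Lemma~\ref{lem:structure} to write any $\cv,\cv'$ in $\Cpub$ as $\cv=\pv+\mu\av$ and $\cv'=\pv'+\mu'\av$ with $\pv,\pv'$ in $\CC$ and $\mu\eqdef\scp{\pv}{\lambdav}$, $\mu'\eqdef\scp{\pv'}{\lambdav}$ in $\fq$, and then simply expand
$$\cv\star\cv'=\pv\star\pv'+\mu'(\pv\star\av)+\mu(\pv'\star\av)+\mu\mu'(\av\star\av).$$
The first summand lies in $\sqc{\CC}$, the two middle ones in $\CC\star\av$, and the last one in $<\av\star\av>$; since the products $\cv\star\cv'$ span $\sqc{\Cpub}$, this yields the first item.

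For the second item, the crude estimate $\dim\sqc{\CC}+\dim(\CC\star\av)+\dim<\av\star\av>=(2k-1)+k+1=3k$ — using $\dim\sqc{\CC}=2k-1$ by Proposition~\ref{prop:square}, which applies since $2k-1<n$ by~\eqref{eq:small_rate} — is off by one, and the extra dimension is saved by exploiting that the term $\pv\star\av$ with $\scp{\pv}{\lambdav}\neq0$ always occurs tied to $\av\star\av$. To make this precise I would use Assumption~\eqref{eq:assumption}, which guarantees that $\CC_{\lambdav^\perp}$ has codimension $1$ in $\Cpub$, pick some $\cv_k$ in $\Cpub\setminus\CC_{\lambdav^\perp}$, and write it via Lemma~\ref{lem:structure} as $\cv_k=\pv_k+(\scp{\pv_k}{\lambdav})\av$ with $\pv_k$ in $\CC$. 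Necessarily $\scp{\pv_k}{\lambdav}\neq0$ (otherwise $\cv_k=\pv_k$ would lie in $\CC_{\lambdav^\perp}$), so after rescaling $\cv_k$ and $\pv_k$ one may assume $\cv_k=\pv_k+\av$. Then $\Cpub=\CC_{\lambdav^\perp}+<\cv_k>$, so $\sqc{\Cpub}$ is spanned by the products $\dv\star\dv'$ with $\dv,\dv'$ in $\CC_{\lambdav^\perp}$ (which lie in $\sqc{\CC}$), the products $\dv\star\cv_k=\dv\star\pv_k+\dv\star\av$ with $\dv$ in $\CC_{\lambdav^\perp}$ (which lie in $\sqc{\CC}+\CC_{\lambdav^\perp}\star\av$), and $\cv_k\star\cv_k=\pv_k\star\pv_k+(2\,\pv_k\star\av+\av\star\av)$ (which lies in $\sqc{\CC}+<2\,\pv_k\star\av+\av\star\av>$). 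Hence $\sqc{\Cpub}\subseteq\sqc{\CC}+\CC_{\lambdav^\perp}\star\av+<2\,\pv_k\star\av+\av\star\av>$, and since $\dim(\CC_{\lambdav^\perp}\star\av)\leqslant\dim\CC_{\lambdav^\perp}=k-1$ we get $\dim\sqc{\Cpub}\leqslant(2k-1)+(k-1)+1=3k-1$.

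I expect the only genuinely delicate point to be this last refinement: it is tempting to bound $\CC\star\av$ by its full dimension $k$, and one has to notice instead that in the $\cv_k$-adapted decomposition the mixed contribution $\pv_k\star\av$ gets absorbed into the single new generator $2\,\pv_k\star\av+\av\star\av$, so that only $\CC_{\lambdav^\perp}\star\av$, of dimension $k-1$, really contributes. Everything else is routine and the argument needs no case distinction; in particular the degenerate possibility $\av\in\CC$ — which would force $\Cpub=\CC$ — is automatically covered and then yields the even smaller value $2k-1$.
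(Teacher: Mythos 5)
Your proof is correct, and while your first item coincides with the paper's argument (expand $\cv \cwp \cv'$ using Lemma~\ref{lem:structure} and sort the four terms into $\sqc{\CC}$, $\CC \cwp \av$ and $<\av\cwp\av>$), your proof of the bound $\dim\left(\sqc{\Cpub}\right) \leqslant 3k-1$ takes a genuinely different route. The paper works with the monomial basis $\yv\cwp\xv^i$ of $\CC=\GRS{k}{\xv}{\yv}$, writes the part of each product of basis elements lying outside $\sqc{\CC}$ as $\vv_{ij}\cwp\av$, and proves a separate lemma (Lemma~\ref{lem:dimension}, which uses an index $i_0$ with $\scp{\lambdav}{(\yv\cwp\xv^{i_0})}\neq 0$, i.e.\ assumption~\eqref{eq:assumption}) showing that the $\vv_{ij}$ span a space of dimension at most $k$, whence $2k-1+k=3k-1$. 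You instead pick a basis of $\Cpub$ adapted to the hyperplane $\CC_{\lambdav^\perp}$ and normalize the one complementary vector to $\cv_k=\pv_k+\av$; then products of two hyperplane elements fall in $\sqc{\CC}$, mixed products fall in $\sqc{\CC}+\CC_{\lambdav^\perp}\cwp\av$ (dimension at most $k-1$), and only $\cv_k\cwp\cv_k$ contributes the single extra generator $2\,\pv_k\cwp\av+\av\cwp\av$, giving $(2k-1)+(k-1)+1$ at once. Your version is more elementary, replacing the computational lemma by a dimension count, and it exhibits the same mechanism that drives Proposition~\ref{prop:attack} (multiplying by elements of $\CC_{\lambdav^\perp}$ kills the $\av\cwp\av$ direction); it is also characteristic-free, since in characteristic $2$ the extra generator simply degenerates to $\av\cwp\av$. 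What the paper's computation buys instead is an explicit generating family for the extra space (the $\uv_i\cwp\av$), not merely its dimension. Both proofs legitimately rely on assumption~\eqref{eq:assumption}, which is in force where the proposition is stated (and if it fails, $\Cpub=\CC$ makes the bound trivial), and condition~\eqref{eq:small_rate} enters only to guarantee $\dim\sqc{\CC}=2k-1\leqslant n$, as you observe.
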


\medskip

The first fact follows immediately from Lemma \ref{lem:structure} and the proof
of this proposition is given in Appendix~\ref{appendixBaldi}. Experimentally it has been observed that the upper-bound 
is quite sharp. Indeed, the dimension of $\sqc{\Cpub}$ has always been
found\footnote{There are however cases where the dimension might be
  even smaller. Let us take for instance
$\av \in \GRS{l}{\xv}{\yv}$ for some integer $l \geqslant 1$ 
where $\GRS{k}{\xv}{\yv}=\CC$. From Proposition~\ref{prop:square} we know that
$ \sqc{\CC}= \GRS{2k-1}{\xv}{\yv \cwp \yv}$ and it can be checked similarly that 
$ \CC \cwp \av \subset \GRS{k+l-1}{\xv}{\yv \cwp \yv}$. It follows immediately from the first statement of
Proposition \ref{prop:dimension_sqCpub} that the dimension of 
$\sqc{\Cpub}$ is upperbounded by $\max\{2k-1,k+l-1\}+1$ which can be obviously smaller than
$3k-1$.} to be equal to $3k-1$ in all our experiments when choosing randomly the
codes and $\Qm$. 

\medskip

The second observation is that when a basis 
$\gv_1,\dots,\gv_k$ of $\Cpub$ is chosen and $l$ other random elements $\zv_1,\dots,\zv_l$, then we may expect that 
the dimension of the vector space generated by all products $\zv_i \cwp g_j$ with $i$ in $\{1,\dots,l\}$ and $j$ in 
$\{1,\dots,k\}$ is the dimension of the full space $\sqc{\Cpub}$ when
$l \geqslant 3$. This is indeed the case when $l \geqslant 4$ 
but it is not true for $l=3$ since we have the following result.
\begin{proposition}\label{prop:three} Let $\code{B}$ be the linear code spanned by $\big
  \{ \zv_i \cwp g_j ~|~ 1 \leqslant i \leqslant 3 \text{~and~} 1 \leqslant j
  \leqslant k \big \}$. It holds that
$\dim \left ( \code{B}   \right ) \leqslant 3k-3.$
\end{proposition}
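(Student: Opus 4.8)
The plan is to bound $\dim(\code{B})$ by a dimension count applied to the decomposition $\code{B} = (\zv_1 \cwp \Cpub) + (\zv_2 \cwp \Cpub) + (\zv_3 \cwp \Cpub)$, which is legitimate because $\gv_1,\dots,\gv_k$ is a basis of $\Cpub$, so for each fixed $i$ the vectors $\zv_i \cwp \gv_1, \dots, \zv_i \cwp \gv_k$ span $\zv_i \cwp \Cpub$. Two elementary facts drive the argument. First, each summand $\zv_i \cwp \Cpub$ is the image of the $k$-dimensional space $\Cpub$ under the linear map $\cv \mapsto \zv_i \cwp \cv$, hence $\dim(\zv_i \cwp \Cpub) \leqslant k$. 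Second, for $i \neq j$ the ``diagonal'' product $\zv_i \cwp \zv_j$ lies in $(\zv_i \cwp \Cpub) \cap (\zv_j \cwp \Cpub)$: indeed $\zv_j \in \Cpub$ gives $\zv_i \cwp \zv_j \in \zv_i \cwp \Cpub$, and symmetrically $\zv_i \cwp \zv_j = \zv_j \cwp \zv_i \in \zv_j \cwp \Cpub$.

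Writing $\code{A}_i = \zv_i \cwp \Cpub$ for brevity, I would first apply Grassmann's identity to the pair $\code{A}_1, \code{A}_2$: since $\code{A}_1 \cap \code{A}_2$ contains the nonzero vector $\zv_1 \cwp \zv_2$, we get $\dim(\code{A}_1 + \code{A}_2) = \dim \code{A}_1 + \dim \code{A}_2 - \dim(\code{A}_1 \cap \code{A}_2) \leqslant k + k - 1 = 2k-1$. Then I would apply it once more to $\code{A}_1 + \code{A}_2$ and $\code{A}_3$: $\dim(\code{B}) = \dim\bigl((\code{A}_1 + \code{A}_2) + \code{A}_3\bigr) = \dim(\code{A}_1 + \code{A}_2) + \dim \code{A}_3 - \dim\bigl((\code{A}_1 + \code{A}_2) \cap \code{A}_3\bigr)$. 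The subspace $(\code{A}_1 + \code{A}_2) \cap \code{A}_3$ contains both $\zv_1 \cwp \zv_3 \in \code{A}_1 \cap \code{A}_3$ and $\zv_2 \cwp \zv_3 \in \code{A}_2 \cap \code{A}_3$, and these two vectors are linearly independent, so that intersection has dimension at least $2$; combining, $\dim(\code{B}) \leqslant (2k-1) + k - 2 = 3k-3$.

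The only ingredient that is not purely formal is the genericity used twice above, namely $\zv_1 \cwp \zv_2 \neq 0$ and the linear independence of $\zv_1 \cwp \zv_3$ and $\zv_2 \cwp \zv_3$; this is the step I expect to require the most care. I would dispatch it by noting that it fails only on a thin exceptional set: $\zv_1 \cwp \zv_2 = 0$ forces $\zv_1$ and $\zv_2$ to have disjoint supports, while a relation $\mu(\zv_1 \cwp \zv_3) = \nu(\zv_2 \cwp \zv_3)$ with $(\mu,\nu) \neq (0,0)$ means $(\mu \zv_1 - \nu \zv_2) \cwp \zv_3 = 0$, i.e. the support of $\zv_3$ avoids that of $\mu \zv_1 - \nu \zv_2$; for $\zv_1,\zv_2,\zv_3$ drawn at random in $\Cpub$, as in the attack, both events have negligible probability, so the bound holds with overwhelming probability. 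This is precisely the point where the ``random elements'' hypothesis of the statement enters. Two remarks round off the picture: if some $\zv_i$ happens to have zero coordinates then $\dim \code{A}_i < k$, which can only improve the bound; and the same scheme applied to four vectors yields merely $\dim \leqslant 4k-6$, which for the relevant parameters exceeds the true dimension $3k-1$ of $\sqc{\Cpub}$ and is therefore vacuous, explaining why $l=3$ is the borderline case.
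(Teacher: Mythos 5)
Your bound is correct and the mechanism you exploit --- the three cross products $\zv_i \cwp \zv_j$, $i<j$, each being ``usable'' by two of the summands $\zv_i \cwp \Cpub$ --- is exactly the phenomenon behind the paper's proof, but the formalization differs and yours is conditionally weaker. The paper expands $\zv_i = \sum_j a_{ij}\gv_j$ and exhibits three explicit linear relations among the $3k$ spanning vectors, e.g.\ $\sum_j a_{2j}\,\zv_1\cwp\gv_j - \sum_j a_{1j}\,\zv_2\cwp\gv_j = \zv_1\cwp\zv_2 - \zv_1\cwp\zv_2 = 0$, and the analogous ones for the pairs $(1,3)$ and $(2,3)$. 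These relations hold identically, whether or not the products $\zv_i\cwp\zv_j$ vanish, and they are linearly independent elements of the kernel of the surjection $\fq^{3k}\to\code{B}$ as soon as $\zv_1,\zv_2,\zv_3$ are linearly independent --- a condition the attack enforces anyway (Algorithm~\ref{algo:Clambdaperp} checks $\dim\left(<\zv_1,\zv_2,\zv_3>\right)=3$), and in whose absence the bound is trivial for $k\geqslant 3$. Hence $\dim(\code{B})\leqslant 3k-3$ needs no further genericity. Your Grassmann route instead needs $\zv_1\cwp\zv_2\neq 0$ and the linear independence of $\zv_1\cwp\zv_3$ and $\zv_2\cwp\zv_3$; these do \emph{not} follow from linear independence of the $\zv_i$'s, and you dispatch them only by an informal ``negligible probability'' argument --- note that $\Cpub$ is not a random code, so the claim that disjoint-support configurations are a thin exceptional set deserves at least a word (it is plausible here because $\Cpub$ is a rank-one perturbation of a GRS code of minimum distance $n-k+1>n/2$, but you do not say this). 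The repair is cheap and lands you on the paper's argument: instead of using $\zv_1\cwp\zv_2$ as a nonzero witness of the intersection, use its two expansions in the spanning set as a relation, which is nontrivial regardless of whether the product vanishes; the three such relations then give the codimension-$3$ drop unconditionally. Your closing remarks (zero coordinates only help; $l=4$ would give $4k-6$, exceeding the typical dimension $3k-1$ of $\sqc{\Cpub}$) are consistent with the paper's discussion.
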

An explanation of this phenomenon is given in Appendix~\ref{appendixBaldi}. Experimentally, it turns out that almost always this upper-bound
is quite tight and the dimension is generally $3k-3$. But if we assume
now that $\zv_1$, $\zv_2$, $\zv_3$ all belong to
$\CC_{\lambdav^\perp}$, which happens with probability $\frac{1}{q^3}$ since $\CC_{\lambdav^\perp}$ is 
a subspace of $\Cpub$ of codimension $1$ (at least when $\lambdav \notin \CC^\perp$),
then the vectors $\zv_i \cwp \gv_j$ generate a subspace with  a much
smaller dimension.
\begin{proposition}\label{prop:attack}
If $\zv_i$ is in $\CC_{\lambdav^\perp}$ for $i$ in $\{1,2,3\}$
then for all $j$ in 
$\{1,\dots,k\}$:
\begin{eqnarray}
 \zv_i \cwp \gv_j    ~\subset~  \sqc{\CC} ~+~ <\zv_1 \cwp \av> ~+~
<\zv_2 \cwp \av> ~+~ <\zv_3 \cwp \av> \label{eq:idea}
\end{eqnarray}
and if $\code{B}$ is the linear code spanned by $\big
  \{ \zv_i \cwp g_j ~|~ 1 \leqslant i \leqslant 3 \text{~and~} 1 \leqslant j
  \leqslant k \big \}$ then
\begin{eqnarray}
 \dim \left( \code{B} \right ) \leqslant 2k+2.  \label{eq:consequence}
\end{eqnarray}
\end{proposition}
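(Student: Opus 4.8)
The plan is to establish the inclusion~\eqref{eq:idea} first, and then deduce the dimension bound~\eqref{eq:consequence} by a straightforward count of the generators of the right-hand side. By Lemma~\ref{lem:structure}, every $\gv_j$ can be written as $\gv_j = \pv_j + (\scp{\pv_j}{\lambdav})\,\av$ for some $\pv_j \in \CC$; similarly each $\zv_i$ equals $\qv_i + (\scp{\qv_i}{\lambdav})\,\av$ for some $\qv_i \in \CC$, but since $\zv_i \in \CC_{\lambdav^\perp} = \CC \cap <\lambdav>^\perp$ the coefficient $\scp{\qv_i}{\lambdav}$ vanishes, so in fact $\zv_i = \qv_i \in \CC$ and $\scp{\zv_i}{\lambdav} = 0$. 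Therefore
\begin{equation*}
\zv_i \cwp \gv_j = \zv_i \cwp \pv_j + (\scp{\pv_j}{\lambdav})\, (\zv_i \cwp \av).
\end{equation*}
The term $\zv_i \cwp \pv_j$ lies in $\sqc{\CC}$ because both $\zv_i$ and $\pv_j$ are in $\CC$, and the second term is a scalar multiple of $\zv_i \cwp \av$. This proves the inclusion~\eqref{eq:idea}, hence $\code{B} \subset \sqc{\CC} + <\zv_1\cwp\av> + <\zv_2\cwp\av> + <\zv_3\cwp\av>$.

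For the dimension bound, I would simply add up the dimensions of the four spaces on the right: by Proposition~\ref{prop:square}, $\dim\sqc{\CC} = \dim \GRS{2k-1}{\xv}{\yv\cwp\yv} = 2k-1$ (this uses the running hypothesis~\eqref{eq:small_rate}, which guarantees $2k-1 < n$ so that the evaluation code really has dimension $2k-1$), and each of the three lines $<\zv_i\cwp\av>$ contributes at most $1$. Thus $\dim(\code{B}) \leqslant (2k-1) + 3 = 2k+2$, which is~\eqref{eq:consequence}.

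The argument is essentially mechanical once Lemma~\ref{lem:structure} is in hand; the only place demanding a small amount of care is the observation that membership of $\zv_i$ in $\CC_{\lambdav^\perp}$ kills the rank-one correction term, reducing $\zv_i$ to an honest codeword of $\CC$ — this is exactly why three ``good'' vectors are far more constraining than three generic ones (compare Proposition~\ref{prop:three}, where one cannot discard the $\av$-component and instead gets the weaker bound $3k-3$). I do not anticipate a genuine obstacle here; if anything, the subtle point worth stating explicitly is that the bound is an inequality rather than an equality because the three one-dimensional spaces $<\zv_i\cwp\av>$ may well already be contained in $\sqc{\CC}$, or may intersect nontrivially, so no lower bound on $\dim(\code{B})$ is claimed (nor is one needed for the attack, where the gap between $2k+2$ and the generic value $3k-3$ is what matters).
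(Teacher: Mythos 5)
Your proposal is correct and follows essentially the same route as the paper: decompose each $\gv_j$ via Lemma~\ref{lem:structure}, use that $\zv_i \in \CC_{\lambdav^\perp} \subset \CC$ so that $\zv_i \cwp \pv_j \in \sqc{\CC}$, and then bound $\dim(\code{B})$ by $(2k-1)+3$. The only cosmetic difference is that you re-derive $\zv_i \in \CC$ through the decomposition of Lemma~\ref{lem:structure}, whereas it is immediate from the definition $\CC_{\lambdav^\perp} = \CC \cap <\lambdav>^\perp$, as the paper uses directly.
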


The proof of this proposition is straightforward and is given in Appendix~\ref{appendixBaldi}. The upper-bound 
given in \eqref{eq:consequence} on the dimension
follows immediately from \eqref{eq:idea}. This leads to Algorithm \ref{algo:Clambdaperp} which computes
a basis of $\CC_{\lambdav^\perp}$. It is essential that the condition
in \eqref{eq:small_rate} holds in order to distinguish the case when the 
dimension is less than or equal to $2k+2$ from higher dimensions.
\begin{algorithm}[h]
  {\bf Input: } A basis $\{\gv_1,\dots,\gv_k\}$ of $\Cpub$.\\
  {\bf Output : } A basis $\mathcal{L}$ of $\CC_{\lambdav^\perp}$.

  \begin{algorithmic}[1]
  \REPEAT \label{re}
   \FOR{$1 \leqslant  i \leqslant  3$}
   \STATE{Randomly choose $\zv_i$ in $\Cpub$}
   \ENDFOR
   \STATE{ $\code{B} \leftarrow ~ < \big\{ \zv_i \cwp g_j ~|~ 1 \leqslant i \leqslant 3 \text{~and~} 1 \leqslant j
  \leqslant k \big \} >$}
  \UNTIL{$\dim(\code{B}) \leqslant  2k+2$ and $\dim
    \left(<\zv_1,\zv_2,\zv_3> \right) = 3$}
  \STATE{$\mathcal{L} \leftarrow \{\zv_1,\zv_2,\zv_3\}$}
  \STATE{$s \leftarrow 4$}
  \WHILE{$s \leqslant  k-1$}
    \REPEAT
  \STATE{Randomly choose $\zv_s$ in $\Cpub$}   
  \STATE{$\code{T} \leftarrow ~ < \big\{ \zv_i \cwp g_j ~|~ i \in \{1,2,s\} \text{~and~} 1 \leqslant j
  \leqslant k \big \} >$}
  \UNTIL{$\dim(\code{T}) \leqslant  2k+2$ \AND $\dim \left(< \mathcal{L} \cup \left\{
      \zv_s \right \} > \right) = s $}
  \STATE{$\mathcal{L} \leftarrow \mathcal{L} \cup \{\zv_s\}$}
  \STATE{$s \leftarrow s+1$}
  \ENDWHILE
    \RETURN{$\mathcal{L}$;}
  \end{algorithmic}
  \caption{\label{algo:Clambdaperp}Recovering $\CC_{\lambda^\perp}$.}
\end{algorithm}

The first phase of the attack, namely finding a suitable triple $\zv_1,\zv_2,\zv_3$ runs 
in expected time of the form $O\left( k^3 q^3 \right)$ because 
each test in the \textbf{repeat} loop \ref{re} has a chance of
$\frac{1}{q^3}$ to succeed. Indeed, 
$\CC_{\lambdav^\perp}$ is of codimension $1$ in $\Cpub$ and therefore
a fraction $\frac{1}{q}$ of elements of $\Cpub$ belongs to $\CC_{\lambdav^\perp}$.
The whole algorithm runs in expected time of the form 
$O\left( k^3 q^3 \right)+ O\left( k^4 q \right)= O\left( k^3 q^3 \right)$ since
$k=O(q)$ and the first phase of the attack  
is dominant in the complexity. Once $\CC_{\lambdav^\perp}$ is
recovered, it still remains to recover the secret code and $\av$. 
The problem at hand can be formulated like this: we know a very large subcode,
namely $\CC_{\lambdav^\perp}$,
of a GRS code that we want to recover. 
This is exactly the problem which was solved in \cite{Wie10}. Applying the approach of this paper to our 
problem amounts to compute  $\sqc{\CC_{\lambdav^\perp}}$ which turns out to be 
equal to $\GRS{2k-1}{\xv}{\yv\cwp\yv}$ (see \cite{MMP11a} for more details). 
It suffices to use the Sidelnikov and Shestakov algorithm
\cite{SidelShesta92} to compute a 
pair $(\xv,\yv \cwp \yv)$ describing $\sqc{\CC_{\lambdav^\perp}}$
as a GRS code.  From this, we deduce a pair $(\xv,\yv)$ defining 
the secret code $\CC$ as a GRS code. The final phase, that is, recovering a possible
$(\lambdav,\av)$ pair and using it to decode the public code $\Cpub$, is detailed in Appendix \ref{sec:appendix_C}.

\subsection{Using duality when the rate is larger than $\frac{1}{2}$}

The codes suggested in \cite[\S5.1.1,\S5.1.2]{BBCRS11a} are all of rate significantly larger than $\frac{1}{2}$,
for instance Example 1 p.15 suggests a GRS code of length $255$, dimension $195$ over $\ff{256}$, 
whereas Example 2. p.15 suggests a GRS code of length $511$, dimension $395$ over $\ff{512}$.
The attack suggested in the previous subsection only applies to rates smaller than $\frac{1}{2}$.
There is a simple way to adapt the previous attack for this case by
considering the dual  $\Cpub^\perp$ of the public code. Note that by
Proposition \ref{pr:dual},  there exists $\yv'$ in $\fq^n$ for which
we have $\displaystyle \CC^\perp = \GRS{n-k}{\xv}{\yv'}$. Moreover,
$\Cpub^\perp$ displays a similar structure as $\Cpub$.
\begin{lemma}\label{lem:structure_dual}
For any $\cv$ from $\Cpub^\perp$ there exists an element $\pv$ in $\CC^\perp$ such that:
\begin{equation}
\label{eq:structure_dual}
\cv = \pv + (\scp{\pv}{\av}) \bv.
\end{equation}
\end{lemma}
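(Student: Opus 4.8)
The plan is to prove the dual statement directly, by writing $\Cpub^\perp$ explicitly as the image of $\CC^\perp$ under an invertible matrix of the shape $\Im$ plus a rank-one term; the identity \eqref{eq:structure_dual} can then simply be read off. First I would note that the secret scrambling matrix $\Sm$ is irrelevant at the level of codes: since $\Sm$ is invertible, $\Cpub$ is the row space of $\Gms\Qm^{-1}$, that is $\Cpub=\Csec\Qm^{-1}$, and by definition of $\CC$ we also have $\Csec=\CC\Pim$.

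Next I would dualize, using the elementary identity $(\code{D}\Mm)^\perp=\code{D}^\perp(\Mm^{-1})^T$, valid for any linear code $\code{D}$ and any invertible matrix $\Mm$; this is the place where the invertibility of $\Qm$ (which is assumed in the scheme) enters. Applying it twice yields
$$\Cpub^\perp=(\Csec\Qm^{-1})^\perp=\Csec^\perp\Qm^T,\qquad \Csec^\perp=(\CC\Pim)^\perp=\CC^\perp(\Pim^{-1})^T=\CC^\perp\Pim,$$
the last equality because $\Pim$ is a permutation matrix, so $(\Pim^{-1})^T=\Pim$. Hence $\Cpub^\perp=\CC^\perp\,\Pim\Qm^T$, and it remains to compute $\Pim\Qm^T$. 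Writing $\Qm^T=\Pim^T+\Rm^T$ gives $\Pim\Qm^T=\Im+\Pim\Rm^T$, and from \eqref{eq:RPi-1}, namely $\Rm\Pim^{-1}=\bv^T\av$, we obtain $\Rm=\bv^T\av\Pim$, hence $\Rm^T=\Pim^T\av^T\bv$ and $\Pim\Rm^T=\Pim\Pim^T\av^T\bv=\av^T\bv$. Therefore $\Pim\Qm^T=\Im+\av^T\bv$ and $\Cpub^\perp=\CC^\perp(\Im+\av^T\bv)$.

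The statement now follows at once: given $\cv\in\Cpub^\perp$, pick $\pv\in\CC^\perp$ with $\cv=\pv(\Im+\av^T\bv)=\pv+(\pv\av^T)\bv$, and since $\pv\av^T=\scp{\pv}{\av}$ is a scalar, this is exactly $\cv=\pv+(\scp{\pv}{\av})\bv$, i.e. \eqref{eq:structure_dual}. I do not expect a real obstacle: the only point needing care is the transpose bookkeeping in the two applications of $(\code{D}\Mm)^\perp=\code{D}^\perp(\Mm^{-1})^T$ and the consistent use of the orientation in \eqref{eq:RPi-1} (it is $\Rm\Pim^{-1}$, not $\Pim^{-1}\Rm$, that equals $\bv^T\av$). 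As a consistency check, one verifies that the map $\pv\mapsto\pv(\Im+\av^T\bv)$ is indeed invertible — equivalently $1+\scp{\av}{\bv}\neq 0$ — which is forced by the invertibility of $\Qm$, mirroring the appearance of the scalar $1+\scp{\av}{\bv}$ in the primal Lemma~\ref{lem:structure}. A completely equivalent route would be to deduce \eqref{eq:structure_dual} from Lemma~\ref{lem:structure} by duality, but the matrix computation above is shorter and self-contained.
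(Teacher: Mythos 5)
Your proof is correct and follows essentially the same route as the paper: both arguments dualize the multiplicative relation between $\Cpub$ and $\CC$ and read the claim off from $\Cpub^\perp=\CC^\perp\left(\Im+\av^T\bv\right)$, which is exactly the paper's $\Cpub^\perp=\CC^\perp\Pm^T$ with $\Pm=\Im+\bv^T\av$. The only cosmetic difference is that you rederive this from $\Cpub=\Csec\Qm^{-1}$ and $\Csec=\CC\Pim$ and compute $\Pim\Qm^T$ directly, whereas the paper reuses $\Cpub=\CC\Pm^{-1}$ from the proof of Lemma~\ref{lem:structure} and transposes $\Pm$.
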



The proof of this lemma is given in Appendix~\ref{appendixBaldi}. It implies that the whole approach of the previous subsection can be 
carried out over $\Cpub^\perp$. It allows to recover the secret code $\CC^\perp$ and therefore also $\CC$. This attack needs
that $2(n-k)+2 < n$, that is $2k > n+2$. In summary, there is an attack as soon as $k$ is outside a narrow interval around
$n/2$ which is $[\frac{n-2}{2},\frac{n+2}{2}]$ . We have implemented
this attack on magma for the aforementioned set of parameters suggested
in \cite{BBCRS11a}, namely $n=255$, $q=2^8$, $k=195$ 
and the average running time over 25 attacks was about 2 weeks.



\newpage

\bibliographystyle{alpha}
\bibliography{crypto}

\newpage
\appendix
  
\newpage 

 \section{Proofs of Section \ref{sec:attack_Baldi}} \label{appendixBaldi}
 
 The first result that will be used throughout this section is a lemma expressing
 $\Rm \Pim^{-1}$ in terms of two vectors in $\fq^n$:
 \begin{lemma}
 \label{lem:RPi-1}
 Assume that $\Rm$ is of rank $1$, then $\Rm \Pim^{-1}$ is of rank $1$ and there exist $\av$ and $\bv$ in $\fq^n$ such that
 $$
 \Rm \Pim^{-1} = \bv^T \av.
 $$
 \end{lemma}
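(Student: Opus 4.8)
The plan is to reduce the statement to the elementary fact that the rank-one matrices are exactly the nonzero outer products $\word{u}^T\word{v}$, together with the fact that right-multiplication by an invertible matrix preserves rank. Since $\Pim$ is a permutation matrix it is invertible, so $\Pim^{-1}$ is well defined (and is itself a permutation matrix); this is the only structural property of $\Pim$ that the proof will use.

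Concretely, first I would write $\Rm=\alphav^T\betav$ for suitable nonzero $\alphav,\betav\in\fq^n$ --- this is the rank-one factorisation already recorded for $\Rm$ in Section~\ref{sec:schemeit}, and in general every rank-one matrix admits such a decomposition. Then I would right-multiply by $\Pim^{-1}$ and regroup the product as
\[
\Rm\Pim^{-1}=\alphav^T\bigl(\betav\Pim^{-1}\bigr),
\]
so that setting $\bv\eqdef\alphav$ and $\av\eqdef\betav\Pim^{-1}$ gives $\Rm\Pim^{-1}=\bv^T\av$, which is precisely the claimed shape. Finally I would check that this outer product really has rank $1$ (and not $0$): because $\Pim^{-1}$ is invertible, $\av=\betav\Pim^{-1}$ is again a nonzero vector --- in fact just a reordering of the coordinates of $\betav$ --- and $\bv=\alphav\neq\word{0}$, so $\bv^T\av$ is a genuine nonzero outer product. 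Alternatively one may simply invoke that multiplying by the invertible matrix $\Pim^{-1}$ leaves the rank unchanged, whence $\Rm\Pim^{-1}$ has the same rank $1$ as $\Rm$.

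There is essentially no obstacle here; the argument is pure linear algebra and only a couple of lines long. The single point not to overlook is the nonvanishing of $\av$ and $\bv$, which ensures that the rank of $\bv^T\av$ is exactly $1$ rather than $0$, and this is immediate from the invertibility of $\Pim$.
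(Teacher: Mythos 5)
Your proof is correct and takes essentially the same route as the paper's: both arguments rest on the outer-product characterisation of rank-one matrices together with the invertibility of the permutation matrix $\Pim$. The only (harmless) difference is that you factor $\Rm=\alphav^T\betav$ first and transport the factorisation explicitly, getting the concrete choice $\bv=\alphav$, $\av=\betav\Pim^{-1}$, whereas the paper first observes that right-multiplication by $\Pim^{-1}$ preserves the rank and then invokes the existence of a decomposition $(b_i a_j)$ abstractly.
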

 \begin{proof}
 The dimension of the column space of $\Rm$ is the same as the dimension of the column space of
 $\Rm \Pim^{-1}$. Since $\Rm$ is of rank $1$, this column space has dimension $1$ which implies
 that $\Rm \Pim^{-1}$ is also of rank $1$. From the fact that the column space of $\Rm \Pim$ is of dimension $1$, this implies 
 that we can find $b_1,\dots,b_n$ and $a_1,\dots,a_n$ in $\fq$ such that 
 $$
 \Rm \Pim^{-1} = (b_i a_j)_{\substack{1 \leqslant  i \leqslant  n\\ 1 \leqslant  j \leqslant  n}}.
 $$
 We let $\av \eqdef (a_j)_{1 \leqslant  j \leqslant  n}$ and $\bv \eqdef (b_i)_{1 \leqslant  i \leqslant  n}$.
 \qed
 \end{proof}
 
 From now on we define
 $$\Pm \eqdef \Im + \Rm \Pim^{-1} = \Im + \bv^T \av.$$
 
We will also need the following lemma
\begin{lemma}
\label{lem:inverse}
If $\Qm$ is invertible, then so is $\Pm$ and
$$
\Pm^{-1} = \Im -\frac{1}{1+\scp{\av}{\bv}} \bv^T \av.
$$
\end{lemma}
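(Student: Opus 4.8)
The plan is to recognize Lemma~\ref{lem:inverse} as an instance of the Sherman--Morrison rank-one update formula, and to prove it in three short steps: first deduce invertibility of $\Pm$ from that of $\Qm$, then check that the scalar $1+\scp{\av}{\bv}$ appearing in the denominator is nonzero, and finally verify the claimed inverse by a direct multiplication.

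First I would observe that $\Pm$ is just a ``twisted'' copy of $\Qm$. Since $\Qm = \Pim + \Rm$, multiplying on the right by $\Pim^{-1}$ gives $\Qm \Pim^{-1} = \Im + \Rm \Pim^{-1} = \Im + \bv^T \av = \Pm$, where we used Lemma~\ref{lem:RPi-1}. As a product of the invertible matrix $\Qm$ with the invertible permutation matrix $\Pim^{-1}$, the matrix $\Pm$ is therefore invertible.

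Next I would rule out $1+\scp{\av}{\bv}=0$. Recalling that $\av \bv^T = \scp{\av}{\bv}$ is a scalar, a direct computation gives $\Pm \bv^T = \bv^T + \bv^T(\av\bv^T) = (1+\scp{\av}{\bv})\,\bv^T$. Since $\Rm\Pim^{-1} = \bv^T\av$ has rank $1$, we have $\bv \neq 0$; if $1+\scp{\av}{\bv}$ were zero, this identity would force $\Pm \bv^T = 0$ with $\bv^T \neq 0$, contradicting the invertibility of $\Pm$ just established. Hence $1+\scp{\av}{\bv} \neq 0$ and the right-hand side of the claimed formula is well defined.

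Finally I would expand the product $\left(\Im + \bv^T\av\right)\left(\Im - \frac{1}{1+\scp{\av}{\bv}} \bv^T\av\right)$, use $\bv^T\av\bv^T\av = (\scp{\av}{\bv})\,\bv^T\av$ to collapse the quadratic term, and collect the coefficient of $\bv^T\av$, which reduces to $1 - \frac{1}{1+\scp{\av}{\bv}} - \frac{\scp{\av}{\bv}}{1+\scp{\av}{\bv}} = 0$; the product is thus $\Im$. The computation on the other side is identical, so the displayed matrix is indeed $\Pm^{-1}$. There is no genuine obstacle here beyond bookkeeping; the only point needing a moment's care is the nonvanishing of $1+\scp{\av}{\bv}$, which is why I would dispatch it before the verification rather than leaving it implicit.
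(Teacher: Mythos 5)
Your proof is correct and follows essentially the same route as the paper: the factorization $\Qm = \Pm\Pim$ yields the invertibility of $\Pm$, and the claimed inverse is then verified by direct multiplication, collapsing $\bv^T\av\bv^T\av=(\scp{\av}{\bv})\,\bv^T\av$. Your only addition is the explicit check that $1+\scp{\av}{\bv}\neq 0$ (via $\Pm\bv^T=(1+\scp{\av}{\bv})\,\bv^T$ and $\bv\neq 0$), a point the paper's proof leaves implicit; this is a welcome refinement rather than a different approach.
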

\begin{proof}
We first observe that 
$\Qm = \Pim + \Rm = (\Im + \Rm \Pim^{-1})\Pim = \Pm \Pim$. Therefore
$\Pm$ is invertible if and only if $\Qm$ is invertible.
Moreover
\begin{eqnarray*}
\Pm \left( \Im -\frac{1}{1+\scp{\av}{\bv}} \bv^T \av \right) &= & \left( \Im + \bv^T \av \right) \left( \Im -\frac{1}{1+\scp{\av}{\bv}} \bv^T \av \right)\\
& = & \Im + \left( 1 - \frac{1}{1+\scp{\av}{\bv} }\right)\bv^T \av  -\frac{1}{1+\scp{\av}{\bv}} \bv^T \av \bv^T \av \\
& = &\Im + \frac{\scp{\av}{\bv}}{1+\scp{\av}{\bv}}\bv^T \av - \frac{\scp{\av}{\bv}}{1+\scp{\av}{\bv}}\bv^T \av\\
& = & \Im.
\end{eqnarray*}
\qed
\end{proof}
 \subsection{Proof of Lemma \ref{lem:structure}}
 
 Let 
 \begin{eqnarray}
 \lambdav &\eqdef &-\Pm^{-1} \bv^T = - \left( \Im -\frac{1}{1+\scp{\av}{\bv}} \bv^T \av \right) \bv^T \nonumber\\
 & = & - \bv^T + \frac{\scp{\av}{\bv}}{1+\scp{\av}{\bv}} \bv^T \nonumber\\
 & = & - \frac{1}{1 + \scp{\av}{\bv}} \bv^T \label{eq:lambdav}.
 \end{eqnarray}
 Let $\cv$ be an element of $\Cpub$.
 Since $\Csec =  \Cpub \Qm =  \Cpub (\Pim + \Rm)= \Cpub (\Im + \Rm\Pim^{-1})\Pim = \Cpub \Pm \Pim$ we obtain
 $\Csec \Pim^{-1}=  \Cpub \Pm$ and therefore
 $$
 \Cpub = (\Csec \Pim^{-1}) \Pm^{-1} = \CC \Pm^{-1}.
 $$
 From this obtain that there exists $\pv$ in $\CC$ such that
 \begin{eqnarray*}
\cv &= & \pv \Pm^{-1}\\
 &=&  \pv \left(  \Im -\frac{1}{1+\scp{\av}{\bv}} \bv^T \av\right)\\
 &=& \pv - \frac{\scp{\bv}{\pv}}{1+\scp{\av}{\bv}} \av\\
 & = &\pv+(\scp{\lambdav}{\pv})\av.
 \end{eqnarray*}
 
 \subsection{Proof of Proposition \ref{prop:dimension_sqCpub}}
 
 Let $\cv$ and $\cv'$ be two elements in $\Cpub$. By applying Lemma \ref{lem:structure} to them we know that there exist two elements
 $\pv$ and $\pv'$ in $\CC$ such that
 \begin{eqnarray*}
 \cv & = & \pv + (\scp{\lambdav}{\pv})\av\\
  \cv' & = & \pv' + (\scp{\lambdav}{\pv'})\av.
 \end{eqnarray*}
 This implies that
 \begin{eqnarray}
 \cv \cwp \cv' & = & (\pv + (\scp{\lambdav}{\pv})\av) \cwp (\pv' + (\scp{\lambdav}{\pv'})\av) \nonumber \\
 &= &\pv \cwp \pv' + ((\scp{\lambdav}{\pv})\pv'+(\scp{\lambdav}{\pv'})\pv)\cwp \av + (\scp{\lambdav}{\pv})(\scp{\lambdav}{\pv'}) \av \cwp \av
 \label{eq:long_expression}
 \end{eqnarray}
 
 It will be convenient to bring the notation
 $$
 \xv^i = \underbrace{\xv \cwp \xv \cwp \dots \cwp \xv}_{\text{$i$ times}}.
 $$
 In other words with this notation, $\CC=\GRS{k}{\xv}{\yv}$ is generated by the
 $\yv \cwp \xv^i$'s for $i$ in $\{0,1,\dots,k-1\}$. 
 Since $\lambdav \notin \CC^\perp$, there exists $i_0 \in \{0,\dots,k-1\}$ such that
 $\scp{\lambdav}{(\yv \cwp \xv^{i_0})} \neq 0$.
 For $i$ in $\{0,1,\dots,k-1\}$, let 
 \begin{eqnarray*}
 \uv_i &\eqdef  &\scp{\lambdav}{(\yv \cwp \xv^{i_0})}\yv\cwp \xv^i +\scp{\lambdav}{(\yv \cwp \xv^{i})}\yv \cwp \xv^{i_0}+ \scp{\lambdav}{(\yv \cwp \xv^{i_0})}\scp{\lambdav}{(\yv \cwp \xv^{i})} \av \\
 \vv_{ij} & \eqdef & \scp{\lambdav}{(\yv \cwp \xv^{j})}\yv\cwp \xv^i +\scp{\lambdav}{(\yv \cwp \xv^{i})}\yv \cwp \xv^{j}+ \scp{\lambdav}{(\yv \cwp \xv^{i})}\scp{\lambdav}{(\yv \cwp \xv^{j})} \av
 \end{eqnarray*}
 
We claim that 
\begin{lemma}\label{lem:dimension}
Let $V$ be the vector space generated by the $\vv_{ij}$'s for $i,j$ in $\{0,1,\dots,k-1\}$.
The dimension of $V$ is less than or equal to $k$.
\end{lemma}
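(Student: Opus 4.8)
The plan is to prove the sharper statement that $V$ is contained in the span of the $k$ vectors $\vv_{0,i_0},\vv_{1,i_0},\dots,\vv_{k-1,i_0}$, whence $\dim(V)\leqslant k$ follows at once. To lighten notation I would set $\mu_i \eqdef \scp{\lambdav}{(\yv\cwp\xv^i)}$, so that $\mu_{i_0}\neq 0$ by the choice of $i_0$ and
$$\vv_{ij} \;=\; \mu_j\,(\yv\cwp\xv^i) \;+\; \mu_i\,(\yv\cwp\xv^j) \;+\; \mu_i\mu_j\,\av .$$

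The heart of the argument is the identity
$$\mu_{i_0}^2\,\vv_{ij} \;=\; \mu_{i_0}\mu_j\,\vv_{i,i_0} \;+\; \mu_{i_0}\mu_i\,\vv_{j,i_0} \;-\; \mu_i\mu_j\,\vv_{i_0,i_0},$$
valid for all $i,j\in\{0,\dots,k-1\}$. I would verify it by substituting the defining expressions of the four vectors it involves and collecting, on each side, the coefficient of each $(\yv\cwp\xv^l)$ and of $\av$: the $(\yv\cwp\xv^{i_0})$ terms cancel on the right, and both sides reduce to $\mu_{i_0}^2\mu_j\,(\yv\cwp\xv^i)+\mu_{i_0}^2\mu_i\,(\yv\cwp\xv^j)+\mu_{i_0}^2\mu_i\mu_j\,\av$. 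I want to stress that this is a polynomial identity in the scalars $\mu_l$ and uses no division, so it holds over a field of arbitrary characteristic; this is why I carry the factor $\mu_{i_0}^2$ instead of eliminating $(\yv\cwp\xv^{i_0})$ via $\vv_{i_0,i_0}=2\mu_{i_0}(\yv\cwp\xv^{i_0})+\mu_{i_0}^2\av$ to obtain a tidier expression for $\vv_{ij}$, a step that would divide by $2$ and hence fail in characteristic $2$, precisely the case of the parameters proposed in \cite{BBCRS11a}.

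Granting the identity, dividing by the nonzero scalar $\mu_{i_0}^2$ exhibits each generator $\vv_{ij}$ of $V$ as an $\fq$-linear combination of $\vv_{i,i_0}$, $\vv_{j,i_0}$ and $\vv_{i_0,i_0}$, all of which belong to $\{\vv_{l,i_0}\mid 0\leqslant l\leqslant k-1\}$. Hence $V$ is spanned by these $k$ vectors and $\dim(V)\leqslant k$. The only real difficulty is to spot the identity; once it is written down its verification is a short computation and the conclusion is immediate.
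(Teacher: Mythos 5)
Your proof is correct and is essentially the paper's own argument: the vectors $\vv_{l,i_0}$ you span with are exactly the auxiliary vectors $\uv_l$ the paper introduces, and your identity is the paper's identity
$\vv_{ij}=\frac{\mu_j}{\mu_{i_0}}\uv_i+\frac{\mu_i}{\mu_{i_0}}\uv_j-\frac{\mu_i\mu_j}{\mu_{i_0}^2}\uv_{i_0}$
merely cleared of denominators by the factor $\mu_{i_0}^2$. Your remark about characteristic $2$ is a fair point of care, but it does not mark a difference: the paper's computation also divides only by $\mu_{i_0}\neq 0$, never by $2$.
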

 \begin{proof}
 We prove that $V$ is generated by the $\uv_i$'s for $i$ in $\{0,1,\dots,k-1\}$.
 This can be proved by noticing that 
 \begin{eqnarray*}
 & \frac{\scp{\lambdav}{\yv \cwp \xv^j}}{\scp{\lambdav}{\yv \cwp \xv^{i_0}}}\uv_i
 + \frac{\scp{\lambdav}{\yv \cwp \xv^i}}{\scp{\lambdav}{\yv \cwp \xv^{i_0}}}\uv_j
 -\frac{(\scp{\lambdav}{\yv \cwp \xv^i})( \scp{\lambdav}{\yv \cwp \xv^j})}{(\scp{\lambdav}{\yv \cwp \xv^{i_0})}\scp{(\lambdav}{\yv \cwp \xv^{i_0})}} \uv_{i_0} &\\
 =&(\scp{\lambdav}{\yv \cwp \xv^{j}})\yv\cwp \xv^i +\frac{(\scp{\lambdav}{\yv \cwp \xv^{i}})(\scp{\lambdav}{\yv \cwp \xv^{j}})}{\scp{\lambdav}{\yv \cwp \xv^{i_0}}}\yv \cwp \xv^{i_0}+ (\scp{\lambdav}{\yv \cwp \xv^{i}})(\scp{\lambdav}{\yv \cwp \xv^{j}}) \av &\\
 &+ & \\
 & (\scp{\lambdav}{\yv \cwp \xv^{i}})\yv\cwp \xv^j +\frac{(\scp{\lambdav}{\yv \cwp \xv^{i}})(\scp{\lambdav}{\yv \cwp \xv^{j}})}{\scp{\lambdav}{\yv \cwp \xv^{i_0}}}\yv \cwp \xv^{i_0}+ (\scp{\lambdav}{\yv \cwp \xv^{i}})(\scp{\lambdav}{\yv \cwp \xv^{j}}) \av & \\
 & - &\\
 & \left(2\frac{(\scp{\lambdav}{\yv \cwp \xv^{i}})(\scp{\lambdav}{\yv \cwp \xv^{j}})}{\scp{\lambdav}{\yv \cwp \xv^{i_0}}}\yv \cwp \xv^{i_0}+ (\scp{\lambdav}{\yv \cwp \xv^{i}})(\scp{\lambdav}{\yv \cwp \xv^{j}}) \av \right) &\\
 =& (\scp{\lambdav}{\yv \cwp \xv^{j}})\yv\cwp \xv^i  + (\scp{\lambdav}{\yv \cwp \xv^{i}})\yv\cwp \xv^j + (\scp{\lambdav}{\yv \cwp \xv^{i}})(\scp{\lambdav}{\yv \cwp \xv^{j}}) \av & \\
 =&   \vv_{ij} &
 \end{eqnarray*}
 \qed
 \end{proof}
 To simplify notation we assume here that $\cwp$ takes precedence over the dot product, that is 
 $\scp{\lambdav}{\yv \cwp \xv^j}=\scp{\lambdav}{(\yv \cwp \xv^j)}$.
 Observe now that Equation \eqref{eq:long_expression} implies that 
 $\cv \cwp \cv'$ belongs to $ \sqc{\CC} + V \cwp \av$.
 The space generated by the $\cv \cwp \cv'$'s has therefore a dimension which is is upper-bounded by
 $2k-1 + k=3k-1$.

 \subsection{Proof of Proposition \ref{prop:three}}
 This follows immediately from the fact that we can express
 $\zv_i$ in terms of the $g_j$'s, say
 $$
 \zv_i = \sum_{1 \leqslant  j \leqslant  k} a_{ij} \gv_j.
 $$
 We observe now that we have the following three  relations between the 
$\zv_i \cwp \gv_j$'s:
\begin{eqnarray}
\sum_{1 \leqslant  j \leqslant  n} a_{2j} \zv_1 \cwp \gv_j - \sum_{1 \leqslant  j \leqslant  n} a_{1j} \zv_2 \cwp \gv_j &= & 0 \label{eq:crossproduct}\\
\sum_{1 \leqslant  j \leqslant  n} a_{3j} \zv_1 \cwp \gv_j - \sum_{1 \leqslant  j \leqslant  n} a_{1j} \zv_3 \cwp \gv_j &= & 0\\
\sum_{1 \leqslant  j \leqslant  n} a_{2j} \zv_3 \cwp \gv_j - \sum_{1 \leqslant  j \leqslant  n} a_{3j} \zv_2 \cwp \gv_j &= & 0
\end{eqnarray}
\eqref{eq:crossproduct} can be verified as follows
$$
\sum_{1 \leqslant  j \leqslant  n} a_{2j} \zv_1 \cwp \gv_j - \sum_{1 \leqslant  j \leqslant  n} a_{1j} \zv_2 \cwp \gv_j 
= \zv_1 \cwp \zv_2 - \zv_1 \cwp \zv_2 = 0.
$$
The two remaining identities can be proved in a similar fashion.

\subsection{Proof of Proposition \ref{prop:attack}}
Assume that the $\zv_i$'s all belong to $\CC_{\lambda^\perp}$. For every $\gv_j$ there exists
$\pv_j$ in $\CC$ such that $\gv_j=\pv_j + \scp{\lambdav}{\pv_j}\av$.
We obtain now
\begin{eqnarray}
\zv_i \cwp \gv_j & = & \zv_i \cwp (\pv_j + (\scp{\lambdav}{\pv_j})\av) \nonumber \\
& =& \zv_i \cwp \pv_j + (\scp{\lambdav}{\pv_j})\zv_i \cwp \av \nonumber \\
& \in & \sqc{\CC} + <\zv_1 \cwp \av> + <\zv_2 \cwp \av> + <\zv_3 \cwp \av>
\end{eqnarray}
This proves the first part of the proposition, the second part follows immediately from 
the first part since it implies that the dimension of
the vector space generated by the $\zv_i \cwp \gv_j$'s is upperbounded by the
sum of the dimension of $\sqc{\CC}$ (that is $2k-1$) and the dimension of the 
vector space spanned by the $\zv_i \cwp \av$'s (which is at most $3$).

\subsection{Proof of Lemma \ref{lem:structure_dual}}

The key to Lemma \ref{lem:structure_dual} is the fact that the
dual of $\Cpub$ is equal to 
$\CC^\perp \Pm^T$. Indeed $\Cpub = \CC\Pm^{-1}$ and therefore for any element $\cv$
of $\Cpub$ there exists an  element $\pv$ of $\CC$ such that 
$\cv = \pv \Pm^{-1}$. Observe now that every element $\cv^\perp$ in $\Cpub^\perp$ satisfies
$\scp{\cv}{\cv^\perp}=0$ and that
$$0=\scp{\cv}{\cv^\perp} = \scp{\pv \Pm^{-1}}{\cv^\perp} = \scp{\pv}{\cv^\perp \left( \Pm^{-1}\right)^T}.$$
Therefore $\Cpub^\perp =\CC^\perp\Pm^T$. This discussion implies that
there exists an element $\pv^\perp$ in $\CC^\perp$ such that
\begin{eqnarray*}
\cv^\perp & = & \pv^\perp \Pm^T\\
& = & \pv^\perp  \left(\Im +  \bv^T \av \right)^T\\
& = & \pv^\perp  + \pv^\perp \av^T \bv \\
& = & \pv^\perp + (\scp{\pv^\perp }{\av}) \bv.
\end{eqnarray*} 

\newpage
\section{Recovering $\av$ and $\lambdav$ from $\CC$ and $\CC_{\lambdav^\perp}$}
\label{sec:appendix_C}



\subsection{The structure of $\Cpub \cap \CC$ and $\Cpub^\perp \cap \CC^\perp$}

The attack which was given in Section \ref{sec:attack_Baldi} enables to find $\CC$ and $\CC_{\lambdav^\perp}$ 
which is equal to the intersection $\Cpub \cap \CC$. From this we deduce 
$\CC^\perp$ and $\CC^\perp \cap \Cpub^\perp$. These intersections are related to
$\lambdav$ and $\av$ by

\begin{lemma}
\label{lem:intersection}
\begin{eqnarray}
\Cpub \cap \CC & = & \{\pv \in \CC | \scp{\pv}{\lambdav}=0\} \label{eq:intersection_CC}\\
\Cpub^\perp \cap \CC^\perp & = & \{\pv \in \CC^\perp | \scp{\pv}{\av}=0\} \label{eq:intersection_CCperp}
\end{eqnarray}
\end{lemma}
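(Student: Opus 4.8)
The plan is to prove each of the two displayed identities in turn; by the duality already established in the proof of Lemma \ref{lem:structure_dual} (namely $\Cpub^\perp = \CC^\perp \Pm^T$), the second identity is formally the ``dual mirror'' of the first, so I would first do \eqref{eq:intersection_CC} carefully and then indicate that \eqref{eq:intersection_CCperp} follows by the same argument applied to the dual code, swapping the roles of $\av$ and $\bv$ (equivalently $\lambdav$) as in Lemma \ref{lem:structure_dual}.

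For \eqref{eq:intersection_CC}, I would argue by double inclusion. For the inclusion ``$\supseteq$'': if $\pv \in \CC$ with $\scp{\pv}{\lambdav}=0$, then by Lemma \ref{lem:structure} (or rather its proof, where the element $\pv$ associated to a codeword $\cv \in \Cpub$ is exactly $\cv \Pm$, and conversely $\cv = \pv \Pm^{-1} = \pv + (\scp{\lambdav}{\pv})\av$), we get $\pv = \pv + (\scp{\lambdav}{\pv})\av \in \Cpub$ since the correction term vanishes; hence $\pv \in \Cpub \cap \CC$. For the inclusion ``$\subseteq$'': take $\cv \in \Cpub \cap \CC$. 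Since $\cv \in \Cpub$, Lemma \ref{lem:structure} gives a (unique, because $\Pm$ is invertible) $\pv \in \CC$ with $\cv = \pv + (\scp{\lambdav}{\pv})\av$. But $\cv$ also lies in $\CC$, so $\cv - \pv = (\scp{\lambdav}{\pv})\av \in \CC$. Now I would invoke the assumption \eqref{eq:assumption}, $\lambdav \notin \CC^\perp$: this is what guarantees $\av \notin \CC$. Indeed, recall $\lambdav = -\frac{1}{1+\scp{\av}{\bv}}\bv^T$ up to transpose (see \eqref{eq:lambdav}), so $\lambdav \notin \CC^\perp$ means $\bv \notin \CC^\perp$; I then need the relation between $\av$, $\bv$ and $\CC$ coming from the fact that $\CC$ is $\Csec$ permuted and $\Rm\Pim^{-1}=\bv^T\av$ — concretely, one checks that $\av \in \CC$ would force $\scp{\pv}{\lambdav}=0$ for all $\pv$ in a way contradicting $\lambdav \notin \CC^\perp$, or more directly that $\av \notin \CC$ can be read off the nondegeneracy of $\Pm$ on $\CC$. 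Granting $\av \notin \CC$, from $(\scp{\lambdav}{\pv})\av \in \CC$ we conclude $\scp{\lambdav}{\pv}=0$, and then $\cv = \pv$, so $\cv \in \{\pv \in \CC \mid \scp{\pv}{\lambdav}=0\}$.

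For \eqref{eq:intersection_CCperp} I would apply verbatim the argument above with $\Cpub$ replaced by $\Cpub^\perp$, $\CC$ by $\CC^\perp$, the structural identity Lemma \ref{lem:structure} replaced by Lemma \ref{lem:structure_dual} (which gives $\cv = \pv + (\scp{\pv}{\av})\bv$ for $\cv \in \Cpub^\perp$, $\pv \in \CC^\perp$), and the linear form $\lambdav$ replaced by $\av$, the vector $\av$ replaced by $\bv$. The nondegeneracy condition needed here is $\bv \notin (\CC^\perp)^\perp = \CC$, i.e. $\av \notin \CC^\perp$ in the role played before by $\lambdav \notin \CC^\perp$; this is precisely the symmetric counterpart of assumption \eqref{eq:assumption} and holds under the same genericity hypothesis on $\Qm$.

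The main obstacle I anticipate is not the double-inclusion bookkeeping, which is routine, but pinning down exactly which nondegeneracy fact ($\av \notin \CC$ for the first identity, $\bv \notin \CC$ for the second) is guaranteed by the standing assumptions, and making sure it is genuinely available rather than an additional hypothesis. I would therefore spend the bulk of the write-up justifying $\av \notin \CC$ from \eqref{eq:assumption} together with the relation $\Rm\Pim^{-1} = \bv^T\av$ and the invertibility of $\Qm$ (equivalently of $\Pm = \Im + \bv^T\av$, which by Lemma \ref{lem:inverse} forces $1 + \scp{\av}{\bv} \neq 0$); if in some degenerate configuration $\av$ did lie in $\CC$, one falls back to the observation already made after \eqref{eq:assumption} that such cases collapse to a direct Sidelnikov–Shestakov attack, so they may be excluded without loss.
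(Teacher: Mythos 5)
Your double-inclusion strategy is workable in outline and is genuinely different from the paper's proof (the paper proves only the inclusion $\{\pv \in \CC \mid \scp{\pv}{\lambdav}=0\} \subseteq \Cpub\cap\CC$ and concludes by a dimension count, and for the dual statement first computes $\dim(\Cpub^\perp\cap\CC^\perp)=n-k-1$). But the step you yourself single out as the crux is exactly where your argument breaks: $\av \notin \CC$ is \emph{not} a consequence of assumption \eqref{eq:assumption}, and neither of your proposed justifications works. The condition $\lambdav \notin \CC^\perp$ only says $\bv \notin \CC^\perp$ and carries no information about whether $\av$ lies in $\CC$; the invertibility of $\Pm = \Im + \bv^T\av$ (Lemma \ref{lem:inverse}) only gives $1+\scp{\av}{\bv}\neq 0$; and it is false that $\av\in\CC$ would force $\scp{\pv}{\lambdav}=0$ for all $\pv$. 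Indeed, if $\av\in\CC$ while $\lambdav\notin\CC^\perp$, the map $\pv \mapsto \pv + (\scp{\lambdav}{\pv})\av$ sends $\CC$ bijectively onto itself, so $\Cpub=\CC$ and \eqref{eq:intersection_CC} is simply false in that configuration (the left-hand side is all of $\CC$, the right-hand side has dimension $k-1$). So $\av\notin\CC$ cannot be derived; it must be a hypothesis. This is how the paper handles it: its proof invokes ``$\Cpub\neq\CC$ by assumption'' (which, in view of Lemma \ref{lem:structure}, is equivalent to $\av\notin\CC$), and in the dual half it shows by contradiction that $\av\in\CC$ would force $\Cpub=\CC$. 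Your closing fallback --- discard the case $\av\in\CC$ because then $\Cpub=\CC$ is a GRS code broken directly by Sidelnikov--Shestakov --- is the correct fix, so it should be promoted to a standing hypothesis rather than something you try (and fail) to prove from \eqref{eq:assumption}.

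The dualization bookkeeping is also off. For the inclusion $\Cpub^\perp\cap\CC^\perp \subseteq \{\pv\in\CC^\perp \mid \scp{\pv}{\av}=0\}$, your argument writes $\cv=\pv+(\scp{\pv}{\av})\bv$ with $\cv,\pv\in\CC^\perp$ and needs the vector $\bv$ (playing the former role of $\av$) not to lie in the ambient code $\CC^\perp$; the required condition is therefore $\bv\notin\CC^\perp$, which is exactly assumption \eqref{eq:assumption} since $\lambdav$ is a nonzero scalar multiple of $\bv$ by \eqref{eq:lambdav}. It is not ``$\bv\notin(\CC^\perp)^\perp=\CC$'', and ``$\av\notin\CC^\perp$'' is neither equivalent to that nor needed; in particular no extra ``genericity hypothesis on $\Qm$'' may be invoked, since nothing in the scheme guarantees it. With these two corrections --- assume $\Cpub\neq\CC$ (equivalently $\av\notin\CC$) for \eqref{eq:intersection_CC}, and use $\bv\notin\CC^\perp$ from \eqref{eq:assumption} for \eqref{eq:intersection_CCperp} --- your double-inclusion proof does go through, and it is arguably more direct than the paper's route, since it avoids both dimension computations.
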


\begin{proof}
Since it is assumed that $\lambdav \notin \CC^\perp$, we deduce that
$\CC_1 \eqdef \{\pv \in \CC | \scp{\pv}{\lambdav}=0\}$ is a subcode of $\CC$
of dimension $k-1$. Let $\pv$ be an element of $\CC_1$. Notice now that 
$\cv \eqdef \pv + (\scp{\lambdav}{\pv})\av$ belongs to $\Cpub$ from Lemma \ref{lem:structure} and
that  $\cv= \pv$ since $\scp{\lambdav}{\pv}=0$ by definition of $\CC_1$. Therefore $\CC_1 \subset \Cpub \cap \CC$.
Since $\Cpub \neq \CC$ by assumption, we obtain that $\dim\left( \Cpub \cap \CC \right)<k$. 
This implies that $\CC_1 = \Cpub \cap \CC$ because the dimension of $\CC_1$ is $k-1$ as explained above.
This proves Equation \eqref{eq:intersection_CC}. 

To prove Equation  \eqref{eq:intersection_CCperp}, let us first compute the
dimension of $\Cpub^\perp \cap \CC^\perp$:
\begin{eqnarray}
\dim(\Cpub^\perp \cap \CC^\perp) & = & \dim(\Cpub^\perp) + \dim( \CC^\perp) - \dim(\Cpub^\perp + \CC^\perp)\\
& = & (n-k) +(n-k) - \dim\left((\Cpub \cap \CC)\right)^\perp \\
& = & (n-k) +(n-k)-(n-(k-1))\\
& = &n-k-1.
\end{eqnarray}
Let  $\CC_2 \eqdef\{\pv \in \CC^\perp | \scp{\pv}{\av}=0\}$. We first claim that $\dim \CC_2=n-k-1$. 

If this were not the case we would have $\dim \CC_2=n-k$ which would imply that
$\CC_2=\CC^\perp$ and $\av \in \CC$. Consider now an element $\cv$ of $\Cpub^\perp$. By Lemma
\ref{lem:structure_dual} we know that there exists $\pv$ in $\CC^\perp$ such that
$\cv = \pv + (\scp{\av}{\pv}) \bv$. Since $(\av,\pv)=0$, this would imply that $\cv=\pv$ and 
that $\cv$ would also be in $\CC^\perp$. This would prove that $\CC^\perp=\Cpub^\perp$ which would itself 
imply that $\CC=\Cpub$. This is a contradiction.

We finish the proof similarly to the previous case by invoking Lemma \ref{lem:structure_dual} for an element
$\pv$ in $\CC_2$ and arguing that:\\
(i) $\cv=\pv+(\scp{\av}{\pv}) \bv$ is in $\Cpub^\perp$ by Lemma \ref{lem:structure_dual},\\
(ii) $\cv=\pv$ because $\scp{\av}{\pv} =0$ and therefore $\CC_2 \subset \Cpub^\perp \cap \CC^\perp$.
The equality of both subspaces is proved by a dimension argument (both have dimension $n-k-1$).
\qed
\end{proof}

\subsection{Recovering a valid $(\av,\lambdav)$ pair}

Choose now an arbitrary element $\rv_1$ in $\Cpub^\perp \setminus \CC^\perp$ and 
choose  any element $\bv_0$ in $(\Cpub \cap \CC)^\perp \setminus \CC^\perp$ and any element
$\av_0$ in $(\Cpub^\perp \cap \CC^\perp)^\perp \setminus \CC$ such that
\begin{eqnarray}
\scp{\av_0}{\rv_1} & \neq & 0\\
\scp{\av_0}{\bv_0} & = & 0 \label{eq:av0bv0}
\end{eqnarray}
 This is obviously possible by arguing on the dimensions of $(\Cpub \cap \CC)^\perp$ and
 $(\Cpub^\perp \cap \CC^\perp)^\perp$.
 We are going to show that up to a multiplicative 
constant these two elements can be chosen as a valid $(\av,\lambdav)$ pair, where we use the following definition

\begin{definition}[valid $(\av,\lambdav)$ pair for $(\Cpub,\CC)$]
We say that a couple $(\av_0,\lambdav_0)$ of elements of $\fq^n\times \fq^n$ forms a 
valid $(\av,\lambdav)$ pair for $(\Cpub,\CC)$ if and only if \\
(i) $\scp{\av_0}{\lambdav_0} \neq -1$,\\
(ii) for any element $\cv$ in $\Cpub$ there
exists an element $\pv$ in $\CC$ such that $\cv = \pv + (\scp{\lambdav_0}{\pv}) \av_0$.
\end{definition}
We will see in Subsection \ref{ss:breaking_Baldi} that we can easily decode  the public code $\Cpub$ with the help of 
such a valid $(\av,\lambdav)$ pair.

We first observe that
\begin{lemma}
There exist $\alpha_0$ and $\beta_0$ in $\fq \setminus \{0\}$, $\pv_0$ in $\CC$, $\qv_0$ in $\CC^\perp$ such that
\begin{eqnarray}
\av_0 & = & \pv_0 + \alpha_0 \av\\
\bv_0 & = & \qv_0 + \beta_0 \bv.
\end{eqnarray}
\end{lemma}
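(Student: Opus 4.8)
The plan is to identify precisely the two ambient spaces from which $\bv_0$ and $\av_0$ were drawn --- namely $(\Cpub\cap\CC)^\perp$ and $(\Cpub^\perp\cap\CC^\perp)^\perp$ --- as $\CC^\perp + <\bv>$ and $\CC + <\av>$ respectively, and then to read the two claimed decompositions straight off the facts $\bv_0\notin\CC^\perp$ and $\av_0\notin\CC$ that were built into the choice of these elements.

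First I would record the non-degeneracy facts needed. Since $\Rm\Pim^{-1}=\bv^T\av$ has rank exactly $1$, both $\av$ and $\bv$ are nonzero; and since $\Qm$ is invertible, Lemma~\ref{lem:inverse} forces $1+\scp{\av}{\bv}\neq 0$, so that $\lambdav=-\frac{1}{1+\scp{\av}{\bv}}\bv$ is a nonzero scalar multiple of $\bv$, whence $<\lambdav>=<\bv>$. Then, for $\bv_0$: by \eqref{eq:intersection_CC} of Lemma~\ref{lem:intersection} we have $\Cpub\cap\CC=\{\pv\in\CC\mid\scp{\pv}{\lambdav}=0\}=\CC\cap<\bv>^\perp$, and dualising (using $(\code{A}\cap\code{B})^\perp=\code{A}^\perp+\code{B}^\perp$ together with $(<\bv>^\perp)^\perp=<\bv>$) gives $(\Cpub\cap\CC)^\perp=\CC^\perp+<\bv>$. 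As $\bv_0$ lies in this space but not in $\CC^\perp$, it can be written $\bv_0=\qv_0+\beta_0\bv$ with $\qv_0\in\CC^\perp$ and $\beta_0\neq 0$ (otherwise $\bv_0=\qv_0\in\CC^\perp$, a contradiction). The treatment of $\av_0$ is the exact mirror image, now invoking \eqref{eq:intersection_CCperp}: $\Cpub^\perp\cap\CC^\perp=\{\pv\in\CC^\perp\mid\scp{\pv}{\av}=0\}=\CC^\perp\cap<\av>^\perp$, hence $(\Cpub^\perp\cap\CC^\perp)^\perp=\CC+<\av>$, and $\av_0\in(\Cpub^\perp\cap\CC^\perp)^\perp\setminus\CC$ gives $\av_0=\pv_0+\alpha_0\av$ with $\pv_0\in\CC$ and $\alpha_0\neq 0$.

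This argument is short and I do not expect a genuine obstacle; the only thing requiring care is that the degeneracies really are excluded. Concretely, one must check that $\bv$ is nonzero and $1+\scp{\av}{\bv}\neq 0$, so that $\lambdav$ spans the same line as $\bv$, and that choosing $\bv_0$ outside $\CC^\perp$ and $\av_0$ outside $\CC$ is possible at all --- which is exactly the dimension count already performed in the proof of Lemma~\ref{lem:intersection}, where $\Cpub\cap\CC$ was shown to have dimension $k-1$ and $\Cpub^\perp\cap\CC^\perp$ dimension $n-k-1$, so that $(\Cpub\cap\CC)^\perp\supsetneq\CC^\perp$ and $(\Cpub^\perp\cap\CC^\perp)^\perp\supsetneq\CC$ (equivalently $\bv\notin\CC^\perp$, which is assumption~\eqref{eq:assumption}, and $\av\notin\CC$).
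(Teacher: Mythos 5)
Your proof is correct and follows essentially the same route as the paper: both identify $(\Cpub\cap\CC)^\perp=\CC^\perp+<\bv>$ (and its mirror $(\Cpub^\perp\cap\CC^\perp)^\perp=\CC+<\av>$) and then read the decompositions off the facts $\bv_0\notin\CC^\perp$ and $\av_0\notin\CC$. The only difference is cosmetic: the paper gets the subspace equality by a dimension count ($\dim\left((\Cpub\cap\CC)^\perp\right)=n-k+1$, together with the containment of $\CC^\perp$ and of $\bv$), whereas you dualize the characterization of Lemma~\ref{lem:intersection} directly via $(\code{A}\cap\code{B})^\perp=\code{A}^\perp+\code{B}^\perp$.
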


\begin{proof}
$(\Cpub \cap \CC)^\perp$ is a subspace of dimension $n-k+1$ which contains $\CC^\perp$ and $\lambdav$, 
and therefore also $\bv$. $\bv$ does not belong to $\CC^\perp$ since $\lambdav$ is assumed to be outside
$\CC^\perp$. This implies that
\begin{equation}
\label{eq:base}
(\Cpub \cap \CC)^\perp = \CC^\perp + <\bv>
\end{equation}
Since $\bv_0$ does not belong to $\CC^\perp$ by definition, there necessarily exist 
$\beta_0$ in $\fq \setminus \{0\}$ and $\qv_0$ in $\CC^\perp$ such that
$$
\bv_0 = \qv_0 + \beta_0 \bv.
$$
The statement on $\av_0$ is proved similarly.
\qed
\end{proof}

Choose now an arbitrary element $\pv_1$ in $\CC \setminus \Cpub$. Let 
\begin{equation}
\label{eq:gamma}
\gamma \eqdef \frac{- (\scp{\pv_1}{\rv_1})}{(\scp{\bv_0}{\pv_1})(\scp{\av_0}{\rv_1})}
\end{equation}
This definition make sense because $\scp{\av_0}{\rv_1}\neq 0$ by choice of $\av_0$ and
$\scp{\bv_0}{\pv_1}\neq 0$ because $\pv_1 \in \CC\setminus \Cpub$ and by the characterization of
the intersection $\CC\cap \Cpub$ of Lemma \ref{lem:intersection}.
\begin{proposition}
\label{prop:attack_Baldi}
$(\av_0,\gamma \bv_0)$ is a valid $(\av,\lambdav)$ pair for $(\Cpub,\CC)$.
\end{proposition}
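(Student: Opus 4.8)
The plan is to verify the two conditions of the definition for the candidate pair $(\av_0,\gamma\bv_0)$. Condition~(i) comes for free: the constraint \eqref{eq:av0bv0} imposed on $\av_0$ gives $\scp{\av_0}{\bv_0}=0$, hence $\scp{\av_0}{\gamma\bv_0}=\gamma\,\scp{\av_0}{\bv_0}=0\neq-1$. So the whole argument concerns condition~(ii), which I would restate as: the $\fq$-linear map $\Psi\colon\CC\to\fq^n$, $\Psi(\pv)\eqdef\pv+(\scp{\gamma\bv_0}{\pv})\av_0$, satisfies $\Psi(\CC)=\Cpub$ (for then every $\cv\in\Cpub$ equals $\Psi(\pv)$ for some $\pv\in\CC$, which is exactly~(ii)).

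First I would collect the relevant data. From the decomposition lemma preceding the proposition, $\av_0=\pv_0+\alpha_0\av$ and $\bv_0=\qv_0+\beta_0\bv$ with $\pv_0\in\CC$, $\qv_0\in\CC^\perp$ and $\alpha_0,\beta_0\in\fq\setminus\{0\}$. By \eqref{eq:lambdav}, $\lambdav=c\,\bv$ with $c\eqdef-\frac{1}{1+\scp{\av}{\bv}}$, which is well defined and nonzero because $\Qm$, hence $\Pm$, is invertible (Lemma~\ref{lem:inverse}); equivalently $1+\scp{\av}{\bv}=-1/c$. Recall that $\av\notin\CC$ (shown inside the proof of Lemma~\ref{lem:intersection}). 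Finally, apply Lemma~\ref{lem:structure_dual} to $\rv_1$ to write $\rv_1=\sv+(\scp{\sv}{\av})\bv$ with $\sv\in\CC^\perp$; since $\rv_1\notin\CC^\perp$ this forces $\scp{\sv}{\av}\neq0$, and since $\pv_1\in\CC\setminus\Cpub$ the description \eqref{eq:intersection_CC} of $\Cpub\cap\CC$ gives $\scp{\lambdav}{\pv_1}\neq0$.

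The core of the proof is then to evaluate the three scalar products occurring in the definition \eqref{eq:gamma} of $\gamma$. Using that $\qv_0,\sv\in\CC^\perp$ while $\pv_0,\pv_1\in\CC$, together with $\bv=\frac1c\lambdav$ and $1+\scp{\av}{\bv}=-1/c$, a short computation gives
\[
\scp{\bv_0}{\pv_1}=\frac{\beta_0}{c}\,\scp{\lambdav}{\pv_1},\qquad
\scp{\pv_1}{\rv_1}=\frac{\scp{\sv}{\av}}{c}\,\scp{\lambdav}{\pv_1},\qquad
\scp{\av_0}{\rv_1}=-\frac{\scp{\sv}{\av}}{c}\bigl(\alpha_0-\scp{\lambdav}{\pv_0}\bigr).
\]
From the last formula and $\scp{\av_0}{\rv_1}\neq0$ (the choice of $\av_0$), using $\scp{\sv}{\av}\neq0$ and $c\neq0$, we get $\alpha_0-\scp{\lambdav}{\pv_0}\neq0$. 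Substituting the three values into \eqref{eq:gamma} and cancelling the common nonzero factors $\scp{\sv}{\av}$, $\scp{\lambdav}{\pv_1}$ and the powers of $c$ then yields the clean identity $\gamma=\dfrac{c}{\beta_0\,(\alpha_0-\scp{\lambdav}{\pv_0})}$; equivalently, writing $\mu\eqdef\gamma\beta_0/c$, one has $\mu=\dfrac{1}{\alpha_0-\scp{\lambdav}{\pv_0}}\neq0$, i.e.\ the key relation $\mu\,(\alpha_0-\scp{\lambdav}{\pv_0})=1$. I expect this to be the only delicate step of the proof and the main obstacle: one must get all three products right and keep careful track of the scalar $c$ and of the identity $1+\scp{\av}{\bv}=-1/c$.

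To conclude, fix $\pv\in\CC$. Since $\qv_0\in\CC^\perp$ and $\bv=\frac1c\lambdav$ we have $\scp{\gamma\bv_0}{\pv}=\gamma\beta_0\,\scp{\bv}{\pv}=\mu\,\scp{\lambdav}{\pv}$; hence, writing $\pv''\eqdef\pv+\mu(\scp{\lambdav}{\pv})\,\pv_0\in\CC$ and using $\av_0=\pv_0+\alpha_0\av$,
\[
\Psi(\pv)=\pv''+\mu\alpha_0\,(\scp{\lambdav}{\pv})\,\av .
\]
Now I would invoke the elementary fact (a consequence of $\Cpub=\CC\Pm^{-1}$ and $\av\notin\CC$; cf.\ the proof of Lemma~\ref{lem:structure}) that for $\qv\in\CC$ and $\delta\in\fq$ one has $\qv+\delta\av\in\Cpub$ iff $\delta=\scp{\lambdav}{\qv}$. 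Applied to $\qv=\pv''$, this turns $\Psi(\pv)\in\Cpub$ into $\mu\alpha_0\scp{\lambdav}{\pv}=\scp{\lambdav}{\pv}+\mu(\scp{\lambdav}{\pv})\scp{\lambdav}{\pv_0}$, which (dividing by $\scp{\lambdav}{\pv}$, both sides being $0$ if it vanishes) is exactly $\mu\,(\alpha_0-\scp{\lambdav}{\pv_0})=1$, already established. Hence $\Psi(\CC)\subseteq\Cpub$. Moreover $\Psi$ is injective: $\Psi(\pv)=0$ forces $\pv=-\mu(\scp{\lambdav}{\pv})\av_0$ with $\av_0\notin\CC$ (as $\alpha_0\neq0$ and $\av\notin\CC$), so $\mu\scp{\lambdav}{\pv}=0$ and, since $\mu\neq0$, $\scp{\lambdav}{\pv}=0$ and $\pv=0$. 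Therefore $\dim\Psi(\CC)=k=\dim\Cpub$, whence $\Psi(\CC)=\Cpub$; this establishes~(ii) and completes the proof.
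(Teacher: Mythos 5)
Your proof is correct, but it takes a genuinely different route from the paper's. The paper never computes $\gamma$ explicitly: it splits $\CC$ as $(\CC\cap\Cpub)\oplus<\pv_1>$, notes that the map $\phi(\pv)=\pv+(\scp{\gamma\bv_0}{\pv})\av_0$ fixes $\CC\cap\Cpub$ pointwise (since $\scp{\gamma\bv_0}{\pv}=0$ there), and then proves $\phi(\pv_1)\in\Cpub$ \emph{dually}, by checking orthogonality against $\Cpub^\perp\cap\CC^\perp$ (via Lemma~\ref{lem:intersection}) and against $\rv_1$ — the definition \eqref{eq:gamma} of $\gamma$ is used exactly once, to make $\scp{\phi(\pv_1)}{\rv_1}$ vanish — before concluding, as you do, by injectivity of $\phi$ and a dimension count. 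You instead work primally: using the decompositions $\av_0=\pv_0+\alpha_0\av$, $\bv_0=\qv_0+\beta_0\bv$, the expression \eqref{eq:lambdav} for $\lambdav$, and an extra application of Lemma~\ref{lem:structure_dual} to $\rv_1$, you derive the closed form $\gamma=c/\bigl(\beta_0(\alpha_0-\scp{\lambdav}{\pv_0})\bigr)$, i.e.\ $\scp{\gamma\bv_0}{\pv}=\mu\scp{\lambdav}{\pv}$ on $\CC$ with $\mu(\alpha_0-\scp{\lambdav}{\pv_0})=1$, and then verify condition (ii) for all $\pv\in\CC$ at once through the characterization of $\Cpub$ as $\{\pv+(\scp{\lambdav}{\pv})\av:\pv\in\CC\}$ together with uniqueness of the coefficient of $\av$ (which needs $\av\notin\CC$, correctly extracted from the proof of Lemma~\ref{lem:intersection}). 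Your three scalar-product evaluations check out, as does the cancellation of $\scp{\sv}{\av}$, $\scp{\lambdav}{\pv_1}$ and the powers of $c$. What the paper's argument buys is economy — no explicit formula for $\gamma$, no decomposition of $\rv_1$, and the role of $\rv_1$ is confined to a single orthogonality check; what yours buys is transparency — it exhibits $\gamma\bv_0$ as congruent to $\mu\lambdav$ modulo $\CC^\perp$ with the right normalization, which explains \emph{why} the choice \eqref{eq:gamma} produces a valid $(\av,\lambdav)$ pair rather than merely verifying that it does.
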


\begin{proof}
The first property of an $(\av,\lambdav)$ pair is clearly met:
$$
\scp{\av_0}{\gamma \bv_0} = 0 \neq -1
$$
by using \eqref{eq:av0bv0}.

Let us now prove that for every $\pv$ in $\CC$, we have
$$
\pv + \scp{\gamma \bv_0}{\pv} \av_0 \in \Cpub.
$$
First consider a $\pv$ which belongs to $\CC \cap \Cpub$. We have
\begin{eqnarray*}
\scp{\gamma \bv_0}{\pv} & = & \gamma \scp{\beta_0 \bv + \qv_0 }{\pv}\\
& = & \gamma\left(  \scp{\beta_0 \bv }{\pv} +  \scp{ \qv_0 }{\pv}\right)\\
&= &0
\end{eqnarray*}
because $\scp{\beta_0 \bv }{\pv}=0$ from the characterization of $\CC \cap \Cpub$ given in Lemma
\ref{lem:intersection} and $\scp{ \qv_0 }{\pv}=0$ because $\qv_0$ belongs to $\CC^\perp$ and 
$\pv$ belongs to $\CC$. This implies
$$
\pv + (\scp{\gamma \bv_0}{\pv})\av_0 = \pv
$$
which belongs to $\Cpub$ by definition of $\pv$.

Let us prove now that 
$\cv_1 \eqdef \pv_1 + (\scp{\gamma \bv_0}{\pv_1}) \av_0$ also belongs to $\Cpub$.
For this purpose we are going to prove that $\cv_1$ is orthogonal to all elements of 
$\Cpub^\perp$.
We achieve this by first proving that $\cv_1$ is orthogonal to any element $\qv_2$ in the 
intersection $\Cpub^\perp \cap \CC^\perp$:
\begin{eqnarray*}
\scp{\cv_1}{\qv_2} & = & (\scp{\pv_1 + (\scp{\gamma \bv_0}{\pv_1}) \av_0)}{\qv_2}\\
& = & \scp{\pv_1}{\qv_2} + \gamma (\scp{\bv_0}{\pv_1}) \scp{\av_0}{\qv_2} \\
& = & 0
\end{eqnarray*}
because $ \scp{\pv_1}{\qv_2}=0$ from the fact that $\pv_1 \in \CC$ and $\qv_2 \in \CC^\perp$ and
$\scp{\av_0}{\qv_2}=0$ by using the characterization of $\Cpub^\perp \cap \CC^\perp$ given in 
Lemma \ref{lem:intersection}. We finish the proof
by  proving that $\cv_1$ is also orthogonal to $\rv_1$:
\begin{eqnarray*}
\scp{\cv_1}{\rv_1} &= &  \scp{(\pv_1 + (\scp{\gamma \bv_0}{\pv_1})\av_0)}{\rv_1}\\
& = & \scp{\pv_1}{\rv_1} + \gamma (\scp{\bv_0}{\pv_1}) \scp{\av_0}{\rv_1}\\
& = & \scp{\pv_1}{\rv_1} - \frac{ \scp{\pv_1}{\rv_1}}{(\scp{\bv_0}{\pv_1})(\scp{\av_0}{\rv_1})} (\scp{\bv_0}{\pv_1})( \scp{\av_0}{\rv_1})\\
& = & 0.
\end{eqnarray*}
This implies that $\cv_1$ belongs to $\Cpub$. 
Notice now that the mapping 
$\phi :\uv \rightarrow \uv + (\scp{\gamma \bv_0}{\uv})\av_0$ is a one-to-one linear mapping whose inverse is given by
$\vv \rightarrow \vv + (\scp{\deltav}{\vv}) \av_0$ where 
$\deltav = -\frac{1}{1+\scp{\gamma \bv_0}{\av_0}} \gamma \bv_0=-\gamma \bv_0$ since $\scp{\gamma \bv_0}{\av_0}=0$ by using 
\eqref{eq:av0bv0}. We have therefore proved that a basis of $\CC$ is transformed into a basis of $\Cpub$ by the mapping
$\phi$. By linearity of the mapping, we deduce that for any element $\cv$ in $\Cpub$ there
exists an element $\pv$ in $\CC$ such that $\cv = \pv + (\scp{\gamma \bv_0}{\pv} )\av_0$.
\qed
\end{proof}

\subsection{Decoding the public code}\label{ss:breaking_Baldi}

Assume that we have a valid $(\av,\lambdav)$ pair for $(\Cpub,\CC)$, say it is $(\av_0,\lambdav_0)$.
We want to decode the vector $\zv \eqdef \cv +\ev$ where $\ev$ is an error of a certain Hamming weight which 
can be corrected by the decoding algorithm chosen for $\CC$ and $\cv$ is an element of the public code. We know that there exists 
$\pv$ in $\CC$ such that 
\begin{equation}
\label{eq:cv_pv}
\cv = \pv + (\scp{\lambdav_0}{\pv}) \av_0.
\end{equation}
We compute $\zv(\alpha) \eqdef \zv + \alpha \av_0$ for all elements 
$\alpha$ in $\fq$. One of these elements $\alpha$ is equal to $-\scp{\lambdav_0}{\pv}$  and 
we obtain $\zv(\alpha) = \pv + \ev$ in this case. Decoding $\zv(\alpha)$ in $\CC$ will reveal $\pv$ and
this gives $\cv$ by using \eqref{eq:cv_pv}.


\end{document}